\PassOptionsToPackage{usenames,dvipsnames}{xcolor}
\documentclass[11pt]{article}

\def\ShowComments{true} 

\def\AlgoStyle{algorithm2e} 

\usepackage[margin=1in]{geometry}
\usepackage[T1]{fontenc}
\usepackage[utf8]{inputenc}
\usepackage{charter}
\usepackage[english]{babel}
\usepackage{setspace}
\usepackage{multicol}
\usepackage[inline]{enumitem}
\usepackage{csquotes}

\usepackage{amsmath, amssymb, amstext, amsthm}
\usepackage{mathtools}
\usepackage{mathrsfs}
\usepackage{nicefrac}
\usepackage{mleftright} 
\usepackage{thmtools, thm-restate}
\DeclareMathAlphabet{\mathbbold}{U}{bbold}{m}{n}

\usepackage{graphicx}
\usepackage[dvipsnames, svgnames, x11names]{xcolor}
\usepackage{tcolorbox}
\usepackage{soul} 

\usepackage{ifthen} 
\usepackage{hyperref}
\usepackage{cleveref} 
\usepackage{refcount}

\ifthenelse{\equal{\AlgoStyle}{algorithm2e}}{
    \usepackage[noend,ruled,linesnumbered]{algorithm2e}
    
    \SetAlFnt{\small} \SetAlCapFnt{\small} \SetAlCapNameFnt{\small}
    \SetAlCapHSkip{0pt} \IncMargin{-\parindent}
    \crefname{algocf}{alg.}{algs.}
    \Crefname{algocf}{Algorithm}{Algorithms}
}{}
\ifthenelse{\equal{\AlgoStyle}{algorithmic}}{
    \usepackage{algorithm}
    \usepackage{algpseudocode}

}{}

\usepackage{todonotes}
\newcommand{\hide}[1]{} 

\ifthenelse{\equal{\ShowComments}{true}}{

}{
    \presetkeys{todonotes}{disable}{}

}

\declaretheorem[numberwithin=section]{theorem}

\declaretheorem[numberlike=theorem]{claim,observation,corollary,proposition,lemma}

\Crefname{assumption}{Assumption}{Assumptions}

\declaretheorem[numberwithin=theorem,name=Claim]{claim-inside-theorem}
\Crefname{claim-inside-theorem}{Claim}{Claims}
\declaretheorem[numberlike=claim-inside-theorem,name=Lemma]{lemma-inside-theorem}
\Crefname{lemma-inside-theorem}{Lemma}{Lemmas}
\declaretheorem[numberlike=claim-inside-theorem,name=Observation]{observation-inside-theorem}

\declaretheorem[numberwithin=lemma,name=Claim]{claim-inside-lemma}
\Crefname{claim-inside-lemma}{Claim}{Claims}
\declaretheorem[numberlike=claim-inside-lemma,name=Corollary]{corollary-inside-lemma}
\declaretheorem[numberwithin=lemma,name=Lemma]{lemma-inside-lemma}
\declaretheorem[numberlike=claim-inside-lemma,name=Observation]{observation-inside-lemma}

\declaretheorem[numberwithin=lemma-inside-theorem,name=Claim]{claim-inside-lemma-inside-theorem}
\declaretheorem[numberwithin=lemma-inside-theorem,name=Lemma]{lemma-inside-lemma-inside-theorem}
\declaretheorem[numberwithin=lemma-inside-theorem,name=Observation]{observation-inside-lemma-inside-theorem}

\newcommand{\etal}{et al.\xspace}

\newcommand{\bigoh}{\mathcal{O}}

\newcommand{\W}{\ensuremath{{\rm {\sf W}}}}

\newcommand{\instI}{\mathcal{I}\xspace}
\newcommand{\instJ}{\mathcal{J}\xspace}

\addto\extrasenglish{}

\newcommand\dsfull{\textsc{Dominating Set}\xspace}
\newcommand\ds{\textsc{DS}\xspace}
\newcommand\dsqfull{\textsc{Dominating Set with Quotas}\xspace}
\newcommand\dsq{\textsc{DSQ}\xspace}

\newcommand\scqfull{\textsc{Set Cover with Quotas}\xspace}
\newcommand\scq{\textsc{SCQ}\xspace}

\newcommand\is{\textsc{IS}\xspace}

\newcommand\ecfull{\textsc{Exact-Equitable Coloring}\xspace}
\newcommand\ec{\textsc{E-EC}\xspace}
\newcommand\floq{\mathrm{f_{lq}}\xspace} 
\newcommand\fupq{\mathrm{f_{uq}}\xspace} 
\newcommand\tw{\mathrm{tw}}
\newcommand\gd{\textsc{Generalized Domination} \xspace}

\newcommand\F{\mathcal{F}\xspace} 
\newcommand\U{\mathcal{U}\xspace} 

\newcommand\C{\mathcal{C}\xspace} 

\newtheorem{rr}{Reduction Rule}
\crefname{rr}{Reduction Rule}{Reduction Rules}

\newcommand{\defparprob}[4]{
\begin{tcolorbox}[colback=purple!5!white,colframe=black!75!black]
  \vspace{-1mm}
  \begin{tabular*}{\textwidth}{@{\extracolsep{\fill}}lr} #1  \\  \end{tabular*}
  {\bf{Input:}} #2  \\
  {\bf{Parameter:}} #3 \\
  {\bf{Question:}} #4
  \vspace{-1mm}
\end{tcolorbox}
}
\newcommand{\fpt}{\ensuremath{{\rm {\sf FPT}}}\xspace}

\usepackage{booktabs,nicematrix}
\usepackage{multirow}

\begin{document}
\title{Dominating Set with Quotas: Balancing Coverage and Constraints}
\author{
    Sobyasachi Chatterjee
    \thanks{The Institute of Mathematical Sciences, HBNI, Chennai, India. {\tt sobyasachic@imsc.res.in}}
    \and
    Sushmita Gupta 
    \thanks{The Institute of Mathematical Sciences, HBNI, Chennai, India. {\tt sushmitagupta@imsc.res.in}} 
    \and 
    Saket Saurabh
    \thanks{The Institute of Mathematical Sciences, HBNI, Chennai, India. {\tt saket@imsc.res.in}}~
    \thanks{University of Bergen, Norway.}
    \and 
    Sanjay Seetharaman
    \thanks{The Institute of Mathematical Sciences, HBNI, Chennai, India. {\tt sanjays@imsc.res.in}} 
    \and
    Anannya Upasana
    \thanks{The Institute of Mathematical Sciences, HBNI, Chennai, India. {\tt anannyaupas@imsc.res.in}}
}
\date{}

\maketitle

\begin{abstract}
We study a natural generalization of the classical \textsc{Dominating Set} problem, called \textsc{Dominating Set with Quotas} (\dsq). In this problem, we are given a graph \( G \), an integer \( k \), and for each vertex \( v \in V(G) \), a lower quota \( \mathrm{lo}_v \) and an upper quota \( \mathrm{up}_v \). The goal is to determine whether there exists a set \( S \subseteq V(G) \) of size at most \( k \) such that for every vertex \( v \in V(G) \), the number of vertices in its closed neighborhood that belong to \( S \), i.e., \( |N[v] \cap S| \), lies within the range \( [\mathrm{lo}_v, \mathrm{up}_v] \). This richer model captures a variety of practical settings where both under- and over-coverage must be avoided—such as in fault-tolerant infrastructure, load-balanced facility placement, or constrained communication networks.

While \ds{} is already known to be computationally hard, we show that the added expressiveness of per-vertex quotas in \dsq{} introduces additional algorithmic challenges. In particular, we prove that \dsq{} becomes \W[1]-hard even on structurally sparse graphs—such as those with degeneracy 2, or excluding \( K_{3,3} \) as a subgraph—despite these classes admitting FPT algorithms for \ds{}. On the positive side, we show that \dsq{} is fixed-parameter tractable when parameterized by solution size and treewidth, and more generally, on nowhere dense graph classes. Furthermore, we design a subexponential-time algorithm for \dsq{} on apex-minor-free graphs using the bidimensionality framework. These results collectively offer a refined view of the algorithmic landscape of \dsq{}, revealing a sharp contrast with the classical \ds{} problem and identifying the key structural properties that govern tractability. 

\end{abstract}
\newpage
\section{Introduction}

{\sc Dominating Set} (\ds, in short) is a prototypical covering problem studied on graphs. In this problem, we are given a graph $G$ and an integer $k$, and the goal is to find a subset of vertices $S$ of size at most $k$ such that for each vertex $v \in V(G)$, either $v\in S$ or a neighbor of $v$ is in $S$. This is one of the classic \textsf{NP}-complete problems, listed in Garey and Johnson~\cite{garey1979computers}, and it remains intractable even on planar graphs of maximum degree 3. It could be viewed as a graph version of the {\sc Set Cover} problem. Since the problem is intractable, it has been studied extensively using algorithmic approaches meant for coping with \textsf{NP}-hardness, including approximation~\cite{johnson1973approximation}, exact~\cite{iwata2012faster}, and parameterized algorithms~\cite{cygan2015parameterized}.

Together with {\sc Vertex Cover}, the {\sc Dominating set} problem can be considered the Drosophila of parameterized complexity. These problems provide a fertile ground for testing new ideas and tools. This has led to the study of several variants and generalizations of \ds. This includes adding constraints on solutions (such as independence or connectivity) or constraints on how vertices outside the solution interact with the solution vertices. These led to problems such as {\sc Connected Dominating Set}, {\sc Independent Dominating Set}, {\sc $[1,j]$-Total Dominating Set}, and {\sc $(\sigma,\rho)$-Dominating Set}~\cite{DBLP:conf/esa/GuhaK96,DBLP:conf/fsttcs/GuhaK98,meybodi2020parameterized,telle1994complexity}, to name a few.

In this paper, we initiate a systematic study of another variant of {\sc Dominating Set}, called \dsqfull: a generalization of \ds in which each vertex has a lower and an upper-quota on domination, motivated by applications in resource/fair allocation~\cite{banerjee2023allocating}. This richer model captures situations where having too little or too much coverage is undesirable. Such cases often appear in real-world systems that need to manage redundancy, balance load, or limit communication costs. Formally, the problem is stated as follows.

\vspace{-0.2cm}
  \defparprob
{\dsqfull (\dsq, in short)}
{An undirected graph $G$, an integer $k$, and domination 
  quotas $\floq,\fupq \colon V(G)\rightarrow \mathbb{N}\cup\{ 0\}$. }
{ $k$}
  {A vertex $v$ is said to be dominated \textit{properly} by a set $S \subseteq V(G)$ if $\floq(v) \le \vert N[v]\cap S\vert \le \fupq(v)$. Is there a subset of vertices $S$ such that $|S| \le k$ and $S$ properly dominates $V(G)$?}
\vspace{-0.1cm}


In this paper, we study \dsq{} in the framework of parameterized complexity. It is well known that \ds{} is \W[2]-complete and, moreover, hard to approximate even within \fpt{} time~\cite{DBLP:journals/siamcomp/DowneyF95,karthik2018parameterized}. Since \dsq{} is a generalization of \ds{}, these hardness results naturally extend to \dsq{} as well. Therefore, to obtain positive algorithmic results for \dsq{}, we must restrict our attention to graph classes where \ds{} is known to be fixed-parameter tractable (\fpt)—that is, where it admits an algorithm with running time of the form \( f(k) \cdot n^{\bigoh(1)} \), with \( n \) denoting the number of vertices in the input graph.

Although \ds{} is \W[2]-complete on general graphs, it is known to be fixed-parameter tractable on several sparse graph classes, including planar graphs, graphs of bounded genus, graphs excluding a fixed graph \( H \) as a (topological) minor, graphs of bounded expansion, nowhere dense graphs, and biclique-free graphs. Notably, biclique-free graphs subsume all of these classes and form the largest known family of graphs on which \ds{} admits an \fpt algorithm. 

Research on \ds{} in sparse graph classes has been one of the most fruitful directions in parameterized complexity, yielding several powerful tools and techniques~\cite{DBLP:journals/algorithmica/AlonG09,DBLP:journals/cj/DemaineH08,DBLP:conf/stacs/DrangeDFKLPPRVS16,DBLP:conf/iwpec/EinarsonR20,DBLP:conf/stacs/FabianskiPST19,DBLP:journals/siamcomp/FominT06,DBLP:conf/soda/FominLST12,DBLP:conf/stacs/FominLST13,DBLP:journals/siamcomp/FominLST20,DBLP:journals/jcss/GajarskyHOORRVS17,DBLP:journals/tcs/LokshtanovMS11}. This line of work is not only theoretically rich but also practically motivated. For many real-world applications, it is reasonable to assume that the solution size parameter \( k \) is small and that the input graphs are structurally sparse—such as having bounded degree, low degeneracy, or constrained topological features. These sparse classes capture a broad range of practical settings. For instance, planar graphs naturally model geographic networks or physical layouts, making them highly relevant in applied contexts.

Motivated by this, our main objective in this paper is to investigate the parameterized complexity of \dsq{} by drawing upon the extensive work on \ds{} and its behavior on structurally sparse graph classes. Our goal is to chart the boundary between tractable and intractable cases for \dsq{} across this rich landscape.

{
\begin{table}[t]
    \small
    \centering
    \begin{NiceTabular}{|c|p{30mm}|p{70mm}|}
        \toprule
        \bf Graph class & { \bf \ds }& \bf \dsq \\
        \toprule
        Bounded degeneracy & { $\fpt(k,d)$ \cite{DBLP:journals/algorithmica/AlonG09}} & \W-hard on $2$-degenerate graphs (Thm. \ref{thm:dsq-hard-3dgen})\\
        \midrule
        \multirow{2}{10em}{\centering Bounded treewidth}  & \multirow{2}{10em}  {$\fpt(\tw)$ \cite{arnborg1989linear} } & \W-hard par. $\tw$ (Thm. \ref{thm:dsq-hard-tw})\\
        & & $\fpt(k, \tw)$ (Thm. \ref{thm:dsq-fpt-ktw})\\
        \midrule
        Apex-minor-free    & { $2^{\bigoh(\sqrt{k})}$ } \cite{fomin2009contraction} & $2^{\bigoh(\sqrt{k} \log k)}$ (Thm. \ref{thm:dsq-fpt-amf})\\
        \midrule
        Nowhere dense   & { $\fpt(k)$ \cite{dawar4domination} } & $\fpt(k)$ (Thm. \ref{thm:dsq-fpt-nd})\\
        \midrule
        Biclique-free      & { $\fpt(k,t)$ \cite{telle2012fpt} }& \W-hard on $K_{3,3}$-free graphs (from Thm. \ref{thm:dsq-hard-3dgen})\\
        \bottomrule
    \end{NiceTabular}
    \caption{An overview of the \fpt status of \ds and \dsq in various graph classes.}
    \label{tab:our_results}
\end{table}
}

\subsection{Our results}\label{subsection:results}

We start by showing that \dsq parameterized by solution size is \W[1]-hard even on graphs of degeneracy $2$, \Cref{thm:dsq-hard-3dgen}. A graph $G$ has degeneracy $d$ if every subgraph of $G$ has a vertex of degree at most $d$. This is in sharp contrast to \ds which has been shown to be \fpt parameterized by $k+d$ on graphs of degeneracy $d$ by Alon and Gutner~\cite{DBLP:journals/algorithmica/AlonG09}. Moreover, \ds is \fpt parameterized by $k+t$ on $K_{t,t}$-free graphs (i.e., graphs that exclude the biclique $K_{t,t}$ as a subgraph)~\cite{telle2012fpt}. Since graphs with degeneracy $d$ are $K_{d+1,d+1}$-free, it follows that \dsq is \W[1]-hard even on $K_{3,3}$-free graphs. Our next result shows that \dsq parameterized by treewidth $(\tw)$ is \W[1]-hard. This is unlike \ds which is known to admit an algorithm with running time $3^\tw n^{\bigoh(1)}$ on graphs of treewidth $\tw$~\cite{DBLP:conf/esa/RooijBR09,lokshtanov2018known}, that is tight under SETH~\cite{DBLP:journals/jcss/ImpagliazzoP01}. We complement this intractability result by showing that \dsq can be solved in time $2^\tw (k+1)^{2\tw} \cdot n^{\bigoh(1)}$ on graphs of treewidth $\tw$ using standard dynamic programming on tree decompositions. Using a ``subset convolution''-like method, we improve this to a $((2k+2)^{\tw} +2^{\bigoh(k)})\cdot n^{\bigoh(1)}$ time algorithm. For this, we follow the underlying proof of the subset convolution method given in \cite{van2021generic} rather than using the tool in a black-box fashion.

Using the \textit{bidimensionality} approach of Fomin~\etal~\cite{fomin2009contraction}, which works based on the structure of graphs of large treewidth, in conjunction with our algorithm on graphs of bounded treewidth, we show that \dsq can be solved in time $2^{\bigoh(\sqrt{k} \log k)} \cdot n^{\bigoh(1)}$ on apex-minor-free graphs. The largest graph class for which we show that \dsq is \fpt is the class of nowhere dense graphs, which includes several graph classes such as (topological) minor-free graphs and graphs of bounded expansion. We prove this result by expressing \dsq using First-Order (FO) logic and then invoking the result of Grohe~\etal~\cite{grohe2017deciding} that FO model-checking is \fpt in the size of the formula on nowhere dense graphs.

In \Cref{sec:appendix-sc}, we study \scqfull, a generalization of \dsq, and show that it is $\fpt$ parameterized by $k+d$ when the sets are of size at most $d$. 
While the same is easy to see for \textsc{Set Cover} (for e.g., by dynamic programming), the presence of elements with lower-quota zero makes the result interesting. Since \dsq on graphs of maximum degree $d$ is a special case of \scq with sets of bounded size, we also obtain an algorithm for \dsq on graphs of bounded degree. 

\subsection{Related work}

The {\sc Threshold Dominating Set} problem is a special case of \dsq{}, where we are given a graph \( G \), positive integers \( k \) and \( r \), and the objective is to find a set \( S \subseteq V(G) \) such that \( |S| \leq k \) and for every vertex \( v \in V(G) \), it holds that \( |N[v] \cap S| \geq r \). Golovach and Villanger~\cite{golovach2008parameterized} showed that this problem is fixed-parameter tractable on graphs of bounded degeneracy. A related problem, studied by Meybodi~\etal~\cite{meybodi2020parameterized}, is the \([1,j]\)-{\sc Total Dominating Set}: given a graph \( G \) and an integer \( k \), the goal is to find a set \( S \subseteq V(G) \) of size at most \( k \) such that for each vertex \( v \in V(G) \), it holds that \( 1 \leq |N(v) \cap S| \leq j \), for some fixed \( j \in \mathbb{N} \). 
They show that \([1,j]\)-{\sc Total Dominating Set} can be solved in time $(2j + 2)^\tw \cdot n^{\bigoh(1)}$.
They also study the $[1,j]$-\textsc{Dominating Set} problem where given a graph $G$ and an integer $k$, the goal is to find a set $S$ of vertices of size at most $k$ such that for each $v \in V(G) \setminus S$, $1 \le |N(v) \cap S| \le j$ for some fixed $j \in \mathbb{N}$.
They show that $[1,j]-$\textsc{Dominating Set} is \W[1]-hard in $(j+1)$-degenerate graphs; can be solved in time $(j + 2)^\tw \cdot n^{\bigoh(1)}$; and is \fpt w.r.t. $k$ in nowhere dense graphs using FO model-checking.
Another closely related problem is the {\sc [$\sigma, \rho$]-Dominating Set}, also known as {\sc Generalized Domination}, introduced by Telle~\cite{telle1994complexity}. In this model, the quota conditions for each vertex depend on whether the vertex is included in the solution. This contrasts with \dsq{}, where each vertex has a fixed quota that is independent of whether it belongs to the solution, and different vertices may have distinct lower and upper bounds. \gd has been well-studied (see \cite{golovach2012parameterized}) and captures problems like \ds, \textsc{Independent Set}, \textsc{Perfect Code}, and \textsc{Strong Stable Set}, for example. Formally, given a graph $G$, and sets $\sigma, \rho$ of non-negative integers, the goal is to find a subset of vertices, $S$, such that $|N(v) \cap S| \in \sigma$ for each $v \in S$ and $|N(v) \cap S| \in \rho$ for each $v \in V(G)\setminus S$. Note that even deciding the existence of a solution is NP-hard in some cases (\textsc{Perfect Code}, for example \cite{cull1998perfect}). The existence, the optimization, and the counting versions of \gd have also been studied, \cite{DBLP:conf/esa/RooijBR09,van2021generic}. We note that the sets $\sigma$ and $\rho$ can be cofinite.

Observe that $[1,j]$-\textsc{Total Dominating Set} is equivalent to \gd with $\sigma = [j-1]$ and $\rho = [j]$.
Assuming $j$ is a constant, van Rooij \cite{van2021generic} presented an algorithm that solves this case of \gd in time $(2j +1)^\tw n^{\bigoh(1)}$ and Focke et al. \cite{focke2023tight} show a matching SETH-based lowerbound.
An $n^{f(\tw)}$ algorithm for \gd was shown by Knop \etal \cite{knop2019simplified} using a model-checking approach.
They study the logic $\textsf{MSO}^\textsf{GL}$ (an extension of $\textsf{MSO}_2$ logic with global and linear constraints: these constraints allow expressing the values in $\sigma$ and $\rho$) and show that model-checking in that logic can be done in time $n^{f(\tw(G),||\phi||)}$ for some computable function $f$.
Another related problem that has been studied in \cite{knop2019parameterized,masavrik2020parameterized} is that of \textsc{Fair $\textup{L}$ Vertex Evaluation}: given a logic $\textup{L}$ formula $\phi$, an undirected graph $G$, and an integer $k$, is there a set $W \subseteq V(G)$ such that $G \models \phi(W)$ and for every vertex $v$ of $G$, it holds that $|N(v) \cap W| \le k$. Note that the last constraint captures the fairness of the set $W$ with respect to each vertex and corresponds to each vertex having between $0$ and $k$ neighbors in $W$.

Gupta~\etal~\cite{DBLP:conf/sofsem/GuptaJPS21} studied the complexity of \textsc{$d$-Hitting Set with Quotas}, a problem equivalent to \textsc{Set Cover with Quotas} where each element appears in at most $d$ sets. We note that neither \dsq (where the degree is unbounded) nor \textsc{Set Cover with Quotas} subsumes the other. The literature on variants of \dsfull is vast; to the best of our knowledge, the problems mentioned above are the most closely related to ours.

\subsection{Preliminaries}
\hide{We use \dsq as shorthand for both the problem name and a solution to it.}
We say that a vertex $v$ has quota $\langle i,j \rangle$ if $\floq(v) = i$ and $\fupq(v) = j$.
\hide{We will assume that the upper-quota of all vertices is at most $k$ since we are looking for solutions of size at most $k$.}
For integers $i$ and $j$, we denote the sets $\{i, \dots, j\}$ and $\{1, \dots, j\}$ by $[i,j]$ and $[j]$, resp.
We use $n$ and $m$ to denote the number of vertices and edges in the input graph, respectively.
\hide{A vertex $v$ is a neighbor of vertex $u$ if there is an edge $vu \in E(G)$.
By $N(v)$ we denote the neighborhood (set of neighbors) of vertex $v$ and by $N[v]$ we denote the closed neighborhood of $v$: $N[v] = N(v) \cup \{v\}$.}
We use $N^d_G(v)$ to denote the set of vertices at distance at most $d$ from $v$ in $G$.

\begin{sloppypar}
\paragraph{Parameterized Complexity.} A parameterized problem (for e.g., \ds parameterized by solution size) is said to be \textit{fixed-parameter tractable} (\fpt or $\fpt$(parameter), in short) if an instance $(\instI,k)$ of the problem can be solved in time $f(k) \cdot |\instI|^{\bigoh(1)}$ for some computable function $f$ \cite{cygan2015parameterized}.
The set of fixed-parameter tractable problems form the complexity class \fpt.
The \W-hierarchy is a set of complexity classes defined to finely capture the hardness of parameterized problems: \fpt $\subseteq \W[1] \subseteq \W[2] \subseteq \dots$.
It is believed that \W-hard problems are not in \fpt.

\end{sloppypar}

\paragraph{Minors.} For an edge $e=uv \in E(G)$, the result of \textit{contracting} $e$ in $G$ (denoted by $G/e$) is a graph in which the endpoints $u$ and $v$ are deleted and replaced by a new vertex that is adjacent to all vertices in $V(G)\setminus \{u,v\}$ that either $u$ or $v$ was adjacent to.
A graph $H$ is a \textit{contraction} of $G$ if it can be obtained from $G$ by a sequence of edge contractions.
A graph $H$ is a \textit{minor} of $G$ (denoted by $H \preceq G$) if it can be obtained from $G$ by a sequence of edge contractions, edge deletions, and vertex deletions.
A graph $G$ is $H$-minor-free if $H \npreceq G$.

\begin{sloppypar}
\paragraph{Treewidth.} The treewidth of a graph is a parameter that measures the ``closeness'' of a graph to a tree.
It is defined based on a structural decomposition of a graph called the \textit{tree decomposition}.
Given a graph $G$, a pair $\mathcal{T} = (T,\{X_t\}_{t \in V(T)})$, where $T$ is a tree and $X_t \subseteq V(T)$ is a \textit{bag} of vertices mapped to node $t$ (vertices in $T$ are referred to as nodes), is a tree decomposition of $G$ if it satisfies the following conditions:
  (1) Each edge in $G$ is contained in some bag; that is, $uv \in E(G) \!\implies\! \exists ~ t \in V(T)\colon \{u,v\} \subseteq X_t$;
  (2) For each vertex $v \in V(G)$, the nodes whose bags contain $v$ (that is, $\{t \in V(T)\colon v \in X_t\}$) form a nonempty connected subtree.
The \textit{width} of the tree decomposition $(T,\{X_t\}_{t \in V(T)})$ is defined to be $\max_{t \in V(T)} |X_t| - 1$, one subtracted from the maximum bag size.
The \textit{treewidth} of $G$, denoted by $\tw(G)$, is defined to be the minimum width among all tree decompositions of $G$.
When the context is clear, we write $\tw$ instead of $\tw(G)$.
\end{sloppypar}

\section{\dsq on graphs of bounded treewidth}

In this section, we study \dsq on graphs of bounded treewidth.
Firstly, we show that \dsq is hard when parameterized by $\tw$.
Following that, we show that \dsq is \fpt when parameterized by $\tw + k$.

\subsection{Hardness of DSQ parameterized by treewidth}
\label{sec:dsq-par-tw}

We establish hardness
by a reduction from \ecfull (\ec): given a graph $H$ and an integer $r$, does $H$ admit a proper vertex coloring with exactly $r$ colors such that the sizes of any two color classes differ by at most 1?
The problem is \W[1]-hard parameterized by $\tw(H)\!+r$\cite{fellows2011complexity}.
In \cite{fellows2011complexity}, the authors define \textsc{Equitable Coloring} (EC) as the problem in which an equitable coloring of $H$ using \textit{at most} $r$ colors is found instead of exactly $r$ colors.
However, they present a reduction that establishes the hardness of \ec, which also establishes the hardness of \textsc{EC}.
Next, we observe a property of the color class sizes in an equitable coloring, which forms a key component in our reduction.
\begin{proposition}
  \label{prop:size-eec}
  Given an equitable coloring in a graph with $n$ vertices using $r$ colors, each color class is of size either $\lfloor n/r \rfloor$ or $\lceil n/r \rceil$.
\end{proposition}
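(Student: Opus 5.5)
Let the color classes have sizes $s_1,\dots,s_r$. By definition of an equitable coloring, $|s_i - s_j| \le 1$ for all $i,j$, and $\sum_{i=1}^r s_i = n$. The plan is to let $m = \min_i s_i$ and show that $m = \lfloor n/r \rfloor$. Every class then has size $m$ or $m+1$, which gives the claim: if all classes have size $m$, then $n = rm$ and $\lfloor n/r\rfloor=\lceil n/r\rceil = m$; otherwise the classes of size $m+1$ have size $\lceil n/r \rceil$.

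First, since the maximum class size is at most $m+1$, every $s_i \in \{m, m+1\}$. Let $a \ge 1$ be the number of classes of size $m$ (at least one, since $m$ is attained). Summing gives
\[
n = am + (r-a)(m+1) = rm + (r-a),
\]
where $0 \le r-a \le r-1$. Hence $n = rm + \rho$ with $0 \le \rho < r$, which is exactly the division of $n$ by $r$ with remainder. So $m = \lfloor n/r \rfloor$ and $\rho = n \bmod r$.

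It then remains to match this to the two cases above. If $\rho = 0$, every class has size $m = n/r$. If $\rho>0$, then $\lceil n/r \rceil = m+1$, so each class has size $\lfloor n/r\rfloor$ or $\lceil n/r \rceil$. As a byproduct, exactly $n \bmod r$ classes have size $\lceil n/r\rceil$; the later reduction may use this count. There is no real obstacle here. The only care needed is the degenerate case $r \mid n$ and making sure empty classes ($m=0$, possible if $r>n$) are handled, and the argument above covers both uniformly.
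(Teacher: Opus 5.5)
Your proof is correct, but it is organized differently from the paper's. The paper argues by contradiction on each side. If some class had size at most $\lfloor n/r\rfloor - 1$, equitability would cap every class at $\lfloor n/r\rfloor$, so at most $r\lfloor n/r\rfloor - 1 < n$ vertices would be colored. Symmetrically, a class of size at least $\lceil n/r\rceil + 1$ forces every class to have size at least $\lceil n/r\rceil$, giving a total of at least $r\lceil n/r\rceil + 1 > n$. You instead determine the minimum class size exactly: you write $n = rm + (r-a)$ with $0 \le r-a < r$ and read off $m = \lfloor n/r\rfloor$ from the division algorithm.

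The paper's route is slightly shorter and needs no case distinction on whether $r \mid n$, since it only bounds each class from both sides. Yours gives the complete size profile of the coloring. That extra information is not needed later: the reduction only uses the two bounds $\ell = \lfloor n/r\rfloor$ and $h = \lceil n/r\rceil$ through the quota of $c^i$.

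Your side remark is misstated when $r \mid n$. In that case all $r$ classes have size $\lceil n/r\rceil$, not $n \bmod r = 0$ of them. The correct statement is that exactly $n \bmod r$ classes have size $\lfloor n/r\rfloor + 1$.
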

\begin{proof}
 If there is a class of size at most $\lfloor n/r \rfloor - 1$, then the maximum number of vertices that can be colored using $r$ colors respecting the equitability condition is $\lfloor n/r \rfloor - 1 + (r-1)\lfloor n/r \rfloor  = r(\lfloor n/r \rfloor)-1 < n$.
 Thus, each class is of size at least $\lfloor n/r \rfloor$.
 
 If there is a class of size at least $\lceil n/r \rceil + 1$, then the minimum number of vertices that can be colored using $r$ colors respecting the equitability condition is $\lceil n/r \rceil + 1 + (r-1)\lceil n/r \rceil  = r(\lceil n/r \rceil)+1 > n$.
 Thus, each class is of size at most $\lceil n/r \rceil$.
\end{proof}

By \Cref{prop:size-eec}, we know the smallest and largest sizes of the color classes. 
The basic idea in the reduction is to express this fact (along with expressing proper vertex coloring) using domination lower and upper-quotas.
Given an instance $\instI = (H,r)$ of \ec, we construct an instance $\instJ = (G,k,\floq,\fupq)$ of \dsq in polynomial time as given in \Cref{reduction2}.
See \Cref{fig:dsq-w-hardness-reduction-tw} for an illustration.

\begin{algorithm}[t]
  \caption{Reduction from \ec to \dsq}
  For each vertex $v \!\in\! V(H)$, create $r$ vertices $x^1_v, \dots, x^r_v$ and set $\floq(x^i_v) =1$ and $\fupq(x^i_v) = 2$ for each $i \in [r]$.
  
  For each edge $uv \in E(H)$ and each $i \in [r]$, create a vertex $y_{uv}^i$ with $\floq(y_{uv}^i)=1$, $\fupq(y_{uv}^i)=2$, and make it adjacent to $x_{u}^i$ and $x_{v}^i$.
  
  Create a vertex ${M}$ adjacent to all vertices created in the previous step:
  $\{y^i_{uv} : i \in [r], uv \in E(H)\}$, with $\floq({M})\!=\!\fupq({M})\!=\!1$.
  
  Create $2n\!+\!r\!+\!2$ pendants $\{p^M_j\}_{j=1}^{2n+r+2}$ adjacent to ${M}$ with $\floq(p^M_j)\!=\!\fupq(p^M_j)\!=\!1$ for each $j \!\in\! [2n\!+\!r\!+\!2]$.
    
  Let $l$ and $h$ denote the smallest and the largest sizes of the color classes: that is, $\ell = \lfloor n / r \rfloor$ and $h = \lceil n/r \rceil$.
  For each $i \in [r]$, create a vertex $c^i$ (to ``check'' the color class size) that is adjacent to all vertices $\{x_v^i\}_{v \in V(H)}$, with $\floq(c^i)  \!=\! \ell \!+\! 1$ and $\fupq(c^i) \!=\! h \!+\! 1$.

  For each vertex $c^i$, create $2n\!+\!r\!+\!2$ pendants $\{p^i_j\}_{j=1}^{2n+r+2}$ adjacent to $c^i$, with $\floq(p^i_j) \!=\! \fupq(p^i_j) \!=\! 1$.

  For each vertex $v \!\in\! V(H)$, create a vertex $\alpha_v$ (to ``force'' a vertex among its neighbors to be picked) adjacent to all vertices $\{x_v^i\}_{i \in [r]}$, with $\floq(\alpha_v) \!=\! \fupq(\alpha_v) \!=\! 2$.
  
  For each $\alpha_v$, create $2n+r+2$ pendants $\{p^v_j\}_{j=1}^{2n+r+2}$ adjacent to $\alpha_v$, with $\floq(p^v_j) = \fupq(p^v_j) = 1$.

  Set $k = 2n+r+1$.
  \label{reduction2}
\end{algorithm}
\begin{figure}[t]
  \centering
  \includegraphics[width=0.7\textwidth]{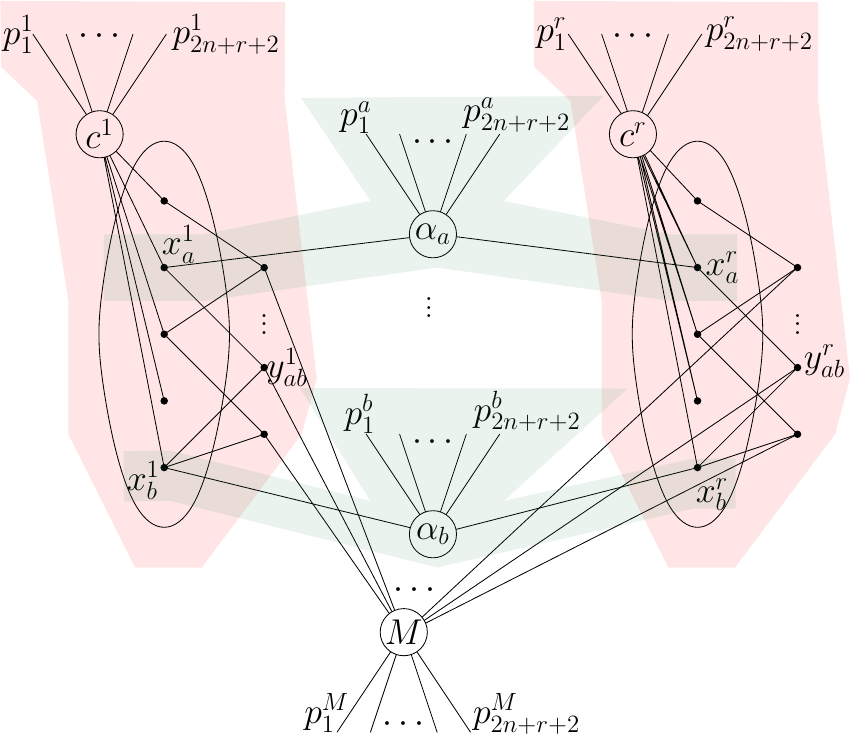}
  \caption{\label{fig:dsq-w-hardness-reduction-tw} The instance $\instJ$ constructed in the reduction from \ec. One can view the construction as $r$ verticals: one for each color and $n$ horizontals: one for each vertex.}
\end{figure}
Any \dsq of size at most $k$ contains $\{{M}\} \cup \{\alpha_v\}_{v \in V(H)} \cup \{c^i\}_{i \in [r]}$ since each of it is adjacent to $k+1$ pendants with quota $\langle 1,1 \rangle$.
Moreover, no solution contains those pendants.
From now on, we call those vertices \textit{newly added pendants}.
Next, we show the correctness of the reduction.

\begin{claim}
  $\instI$ is a yes-instance of $\ec$ if and only if $\instJ$ is a yes-instance of $\dsq$. 
\end{claim}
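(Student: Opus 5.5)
The plan is to pin down the exact shape of every solution of $\instJ$ through a budget count, and then read the two directions off this shape. Let $F = \{M\} \cup \{\alpha_v\}_{v \in V(H)} \cup \{c^i\}_{i \in [r]}$. As observed just before the claim, $F$ is contained in every solution and no newly added pendant belongs to one. Since $|F| = n+r+1$ and $k = 2n+r+1$, at most $n$ further vertices can be chosen. For each $v$, we have $N[\alpha_v] \cap S = \{\alpha_v\} \cup (\{x_v^i\}_{i\in[r]} \cap S)$, and the quota $\langle 2,2\rangle$ of $\alpha_v$ forces exactly one $x_v^i$ into $S$. These choices already use $n$ vertices, so every solution has the form $S = F \cup \{x_v^{\varphi(v)} : v \in V(H)\}$ for some map $\varphi\colon V(H)\to[r]$. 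In particular, no $y^i_{uv}$ lies in $S$.

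Next I translate each remaining quota into a property of $\varphi$.
\begin{itemize}
\item For $y^i_{uv}$, the set $N[y^i_{uv}]\cap S$ consists of $M$ together with whichever of $x_u^i, x_v^i$ are chosen. The upper quota $2$ says that $u$ and $v$ do not both receive colour $i$, so $\varphi$ is a proper colouring. The lower quota $1$ holds automatically because of $M$.
\item For $M$, the count is exactly $1$, since no $y$-vertex and no pendant of $M$ is chosen.
\item For $c^i$, the count is $1 + |\varphi^{-1}(i)|$, so the quota $\langle \ell+1,h+1\rangle$ is equivalent to every colour class having size in $[\ell,h]$.
\item Each newly added pendant sees exactly its forced neighbour.
\end{itemize}

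For the forward direction, take an exact equitable colouring $\varphi$. \Cref{prop:size-eec} gives class sizes in $[\ell,h]$, so the set $S$ defined from $\varphi$ meets all of the quotas listed above. For the backward direction, a solution yields a proper colouring $\varphi$ whose class sizes lie in $[\ell,h]$. Any two class sizes then differ by at most $h-\ell\le 1$, so $\varphi$ is equitable. It uses exactly $r$ colours provided $\ell\ge 1$, that is, $n\ge r$, which we may assume since otherwise $\instI$ is trivially a no-instance.

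The step I expect to be the main obstacle is the quota check at the $x$-vertices. Since $x_v^i$ is adjacent to both $\alpha_v$ and $c^i$, which are both in $F$, the count $|N[x_v^i]\cap S|$ equals $2$ when $x_v^i\notin S$ and $3$ when $x_v^i\in S$. With the quota $\langle 1,2\rangle$ as written in \Cref{reduction2}, the chosen vertices $x_v^{\varphi(v)}$ would therefore be over-dominated. I will first recheck this count against the intended construction. If it stands, I will argue with the upper quota of $x_v^i$ taken to be $3$, or equivalently with $x$-quota $\langle 2,3\rangle$. With that change, every $x$-vertex is properly dominated in both directions and the rest of the argument above goes through unchanged.
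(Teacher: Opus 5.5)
Your plan follows the same route as the paper's proof. The pendants force $M$, every $\alpha_v$ and every $c^i$ into the solution, and $\alpha_v$ picks exactly one colour for $v$. The vertex $y^i_{uv}$, together with $M\in S$, enforces properness, and $c^i$, together with the proposition on class sizes, enforces equitability.

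The obstacle you flagged is real, and it is an error in the paper, not in your plan. The paper's forward direction says that $x^i_v$ ``is dominated by $c^i$ and possibly by itself''. This overlooks the neighbour $\alpha_v\in S$. With the quota $\langle 1,2\rangle$ as written, every solution contains $c^i$ and $\alpha_v$ and no $y$-vertex. So $|N[x^i_v]\cap S|$ equals $2$ or $3$ according as $x^i_v\notin S$ or $x^i_v\in S$. This forces every $x$-vertex out of $S$, and then each $\alpha_v$ is dominated only once. Hence $\mathcal{J}$ is a no-instance whenever $n\ge 1$. The claim is therefore false for the construction as literally stated, already for $H=K_1$, $r=1$.

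Raising the upper quota of the $x$-vertices to $3$ changes no edges, so the treewidth bound is unaffected. With that change your argument is a complete proof. Your separate treatment of $n<r$ (ensuring $\ell\ge 1$, so that all $r$ colours are used) also covers a point the paper passes over silently.
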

\begin{proof}
  Suppose that $\instI$ admits an equitable coloring $\mathrm{col} \colon V(H)\rightarrow[r]$.
  We claim that $S = \{{M}\} \cup \{\alpha_v\}_{v \in V(H)} \cup \{c^i\}_{i \in [r]} \cup \{x^i_v \colon v \in V(H), \mathrm{col}(v)=i\}$ is a \dsq in $\instJ$.
  All the newly added pendants are dominated properly.
  Each vertex $\alpha_v$, which has quota $\langle2,2\rangle$, is dominated properly: by itself, and by $x^{\mathrm{col}(v)}_v$.
  Any vertex of the form $x^i_v$, which has quota $\langle1,2\rangle$ is dominated by $c^i$ and possibly by itself (if $x^i_v \in S$).
  Since $\mathrm{col}$ is a proper vertex coloring, no two endpoints of an edge are colored the same.
  Thus, any vertex of the form $y^i_{uv}$, which has quota $\langle1,2\rangle$, is dominated by ${M}$ and by at most one of its neighbors: $x^i_{u}$ and $x^i_{v}$.
  Since $\mathrm{col}$ is an equitable coloring, each vertex $c^i$ is dominated by itself, and by between $\ell$ and $h$ vertices in $S \cap N(c^i) \!=\! \{x^i_v \colon v\in V(H), \mathrm{col}(v)\!=\!i\}$.
  Thus, $S$ is a $2n\!+\!r\!+\!1\!=\!k$ sized \dsq in $\instJ$.
  
  For the other direction, suppose that $S'$ is a \dsq of size at most $2n+r+1$ in $\instJ$.
  We have the following:
  \begin{itemize}[itemsep=0pt]
    \item $S'$ contains the set $\{{M}\} \cup \{\alpha_v\}_{v \in V(H)} \cup \{c^i\}_{i \in [r]}$ and none of the newly added pendants.
    \item For each vertex $\alpha_v \in V(G)$, there is a unique $i \in [r]$ such that $x^i_v \in S'$ (otherwise $\alpha_v$ is not dominated properly since its quota is $\langle 2,2 \rangle$ and $\alpha_v \in S'$).
    \item For each vertex of the form $y^i_{uv}$, at most one of $x^i_u$ and $x^i_v$ is in $S'$ since $x_M \in S'$.
    \item For each $i \in [r]$, we have $\ell \le |S' \cap \{x^i_v \colon v \in V(H)\}| \le h$.
  \end{itemize}
  Thus, we obtain $\mathrm{col} \colon V(H)\rightarrow [r]$, by mapping to each vertex $v\in V(H)$ the unique color $i\in [r]$ such that $x^i_v \in S'$, an equitable coloring using exactly $r$ colors in $\instJ$.
\end{proof}

To establish \W[1]-hardness parameterized by treewidth, we also show that $\tw(G)$ is bounded by a function of $\tw(H)$ and $r$, by exhibiting a tree decomposition of $G$ of such width.
\begin{claim}
    $\tw(G) \le (2r+2)\tw(H)$.
\end{claim}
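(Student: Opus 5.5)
Starting from an optimal tree decomposition $(T,\{X_t\})$ of $H$ of width $\tw(H)$, we build a tree decomposition of $G$ directly. The high-degree vertices $M$ and $c^1,\dots,c^r$ will be placed in every bag. For each vertex $v$ of $H$, the group $\{x^i_v\}_{i\in[r]}\cup\{\alpha_v\}$ replaces $v$ in every bag that contains $v$. The remaining vertices ($y^i_{uv}$ and all pendants) have degree at most two apart from the universal vertices, and are attached through small leaf bags.

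\textbf{Construction.} For every node $t$ we set
\[
X'_t=\{M,c^1,\dots,c^r\}\cup\bigcup_{v\in X_t}\bigl(\{x^i_v : i\in[r]\}\cup\{\alpha_v\}\bigr).
\]
Then we add leaf nodes. For each edge $uv\in E(H)$, choose a node $t$ with $u,v\in X_t$. For each $i\in[r]$, attach to $t$ a leaf with bag $\{y^i_{uv},x^i_u,x^i_v,M\}$. For each pendant adjacent to $M$, to some $c^i$, or to some $\alpha_v$, attach a leaf with bag consisting of the pendant and its unique neighbour, hanging it from a node whose bag contains that neighbour.

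\textbf{Verification.} We check the two tree decomposition conditions in turn.
\begin{itemize}
\item \emph{Edge condition.} Edges among $x$'s, $\alpha$'s, $c^i$ and $M$ lie in a common main bag. This holds because $x^i_v$ and $\alpha_v$ appear together wherever $v$ does, and $M$ and the $c^i$ are everywhere. The edges of $y^i_{uv}$ and of the pendants are covered by their leaf bags.
\item \emph{Connectivity condition.} The occurrences of $x^i_v$ and $\alpha_v$ mirror those of $v$ in $T$, together with some attached leaves, so they form a connected subtree. The universal vertices occur everywhere. Each $y$ and each pendant occurs in a single bag.
\end{itemize}

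\textbf{Width count.} A main bag has size $(r+1)|X_t|+r+1\le (r+1)(\tw(H)+1)+r+1$, so the width is at most $(r+1)\tw(H)+2r+1$. Leaf bags have size at most $4$. For $\tw(H)\ge 1$, the bound $(r+1)\tw(H)+2r+1\le (2r+2)\tw(H)$ fails only when $(r+1)\tw(H)<2r+1$, that is, essentially only for $\tw(H)=1$. If $H$ has no edges, the problem is trivial and can be avoided, or $\tw(H)=0$ can be handled separately.

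\textbf{Main obstacle.} The only delicate step is matching the exact constant $(2r+2)\tw(H)$, in particular for very small $\tw(H)$. I would first state the cleaner bound $(r+1)(\tw(H)+2)$. For the \W[1]-hardness conclusion, any bound of the form $f(r,\tw(H))$ suffices. To recover the stated bound exactly, I would assume $\tw(H)\ge 2$, which is without loss of generality for the hard instances of \ec, or otherwise absorb the additive terms. For example, placing only $M$ and the $c^i$ in all bags gives the count $(r+1)(\tw(H)+1)+r+1-1$. For $\tw(H)\ge 2$ this is at most $(2r+2)\tw(H)$, since $(r+1)\tw(H)\ge 2r+2>2r+1$.
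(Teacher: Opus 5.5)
Your construction is essentially the paper's (substitute $\{\alpha_v,x^1_v,\dots,x^r_v\}$ for $v$, keep $M$ and the $c^i$ in the main bags, and hang small leaf bags for the $y^i_{uv}$ and the pendants), and it is correct. In fact it is slightly more careful than the paper's version in three places:
\begin{itemize}
\item The paper's leaf bag $\{x^i_u,y^i_{uv},x^i_v\}$ omits $M$, so the edge $My^i_{uv}$ is never covered; your bag $\{y^i_{uv},x^i_u,x^i_v,M\}$ covers it.
\item Putting $M,c^1,\dots,c^r$ in every bag avoids connectivity problems when the decomposition of $H$ has empty bags.
\item Your width count $(r+1)\tw(H)+2r+1$ shows that this decomposition reaches the paper's asserted $(2r+2)\tw(H)$ only when $\tw(H)\ge 2$. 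For $\tw(H)=0$ the inequality is plainly false, since $G$ always has edges. As you note, this is immaterial for \Cref{thm:dsq-hard-tw}, because any bound $f(r,\tw(H))$ suffices.
\end{itemize}
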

\begin{proof}
\begin{sloppypar}
    Let $\mathcal{T}$ be a tree decomposition of $H$ with width $\tw(H)$.
    For each $v \in V(H)$, replace each occurence of $v$ in bags of $\mathcal{T}$ by vertices $\{c^1 \dots c^r, ~{M}, ~\alpha_v, ~x^1_v \dots x^r_v\}$.
  For each edge $uv \in E(H)$ and each $i \in [r]$, create a pendant node $q^i_{uv}$ in $\mathcal{T}$ with bag $\{x^i_u, y^i_{uv}, x^i_v\}$, adjacent to any node containing both $x^i_{u}$ and $x^i_v$ (such a node exists by the properties of $\mathcal{T}$).
  For each newly added pendant vertex $p$ with neighbor $\beta$, create a pendant node with bag $\{p,\beta\}$ adjacent to a node containing $\beta$. Note that since $\beta \in \{{M}\} \cup \{c^i\}_{i \in [r]} \cup \{\alpha_v\}_{v \in V(H)}$, such a node always exists by the first step in our construction.
Observe that any edge in $G$ is contained in a bag in $\mathcal{T}$; and the set of nodes containing any vertex $v \in V(G)$ form a nonempty connected subtree in $\mathcal{T}$.
Thus, $\mathcal{T}$ is a tree decomposition of $G$ of width $(2r+2) \tw(H)$.
Consequently, $\tw(G) \le (2r+2)\tw(H)$. \qedhere
\end{sloppypar}
\end{proof}


Overall, we have the following.
\begin{theorem}
  \label{thm:dsq-hard-tw}
  \dsq parameterized by treewidth is \textup{\W[1]}-hard.
\end{theorem}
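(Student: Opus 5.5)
The plan is to present \Cref{thm:dsq-hard-tw} as the combination of the two claims just established with the known hardness of \ec, packaged as a parameterized reduction. We start from an instance $\instI=(H,r)$ of \ec, which is \W[1]-hard parameterized by $\tw(H)+r$ by \cite{fellows2011complexity}. We then apply \Cref{reduction2} to produce $\instJ=(G,k,\floq,\fupq)$ and argue that this map is a parameterized reduction from \ec parameterized by $\tw(H)+r$ to \dsq parameterized by $\tw(G)$.

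Three properties need to be checked in turn.

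\textbf{Running time.} The construction runs in time polynomial in $|V(H)|+|E(H)|+r$. Without loss of generality $r\le n$, since otherwise the instance is trivially a no-instance (for $r>n$ some color class is empty, which cannot happen with exactly $r$ nonempty... or we simply reject). The graph $G$ then has $\bigoh(r(n+m) + (n+r)(2n+r+2))$ vertices, and each step of the algorithm is clearly polynomial.

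\textbf{Equivalence.} The first claim gives directly that $\instI$ is a yes-instance of \ec if and only if $\instJ$ is a yes-instance of \dsq.

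\textbf{Parameter bound.} The second claim gives $\tw(G)\le(2r+2)\tw(H)$. The new parameter is therefore bounded by a computable function of the old parameter $\tw(H)+r$, as required for a parameterized reduction.

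Since parameterized reductions preserve \W[1]-hardness, an \fpt algorithm for \dsq parameterized by $\tw(G)$ would yield an \fpt algorithm for \ec parameterized by $\tw(H)+r$. This would contradict its \W[1]-hardness unless $\fpt=\W[1]$.

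The only delicate point, which I expect to be the main (minor) obstacle, is that the target parameter must depend on the source parameter only. Here it does, because $r$ is itself part of the source parameter. A tree decomposition of $H$ of near-optimal width can be computed in \fpt time if needed, though the reduction only uses its existence to bound $\tw(G)$. I would also remark that $k$ is not bounded by the parameter, which is fine: the theorem is about the parameter $\tw$ alone.
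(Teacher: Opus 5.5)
\textbf{The gap: the equivalence claim you cite is false as written.} Your proof has exactly the paper's structure: compose the reduction of \Cref{reduction2} from \ec (parameterized by $\tw(H)+r$) with the equivalence claim and the bound on $\tw(G)$. That template is right. However, you take the equivalence ($\instI$ yes iff $\instJ$ yes) as already established, and for the construction as literally written it fails. Your argument inherits this error from the paper.

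Every solution of size at most $k=2n+r+1$ contains $M$, all $\alpha_v$ and all $c^i$, since each of them carries $k+1$ pendants of quota $\langle 1,1\rangle$. But $x^i_v$ is adjacent to both $c^i$ and $\alpha_v$, so it is already dominated twice. Selecting $x^i_v$ would dominate it three times, exceeding its upper-quota $2$. Hence no $x^i_v$ is ever selected. Then $|N[\alpha_v]\cap S|=1$, which violates the quota $\langle 2,2\rangle$ of $\alpha_v$. So $\instJ$ is a no-instance for every $H$ with at least one vertex. The paper's forward direction says $x^i_v$ is dominated ``by $c^i$ and possibly by itself'', overlooking $\alpha_v$.

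\textbf{The fix.} Give each $x^i_v$ quota $\langle 2,3\rangle$; any lower-quota at most $2$ together with upper-quota at least $3$ works. Then $x^i_v$ is dominated by $c^i$, by $\alpha_v$, and possibly by itself. It is never dominated by a $y$-vertex, since these are excluded from any solution because $M$ has quota $\langle 1,1\rangle$. With this change both directions of the equivalence go through as in the paper. You should state and verify this corrected claim rather than cite it.

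\textbf{Smaller points.}
\begin{enumerate}
\item The bag construction in the treewidth claim only certifies width $(r+1)(\tw(H)+2)-1$. This is larger than $(2r+2)\tw(H)$ when $\tw(H)\le 1$. Also, $c^1,\dots,c^r,M$ should be added to every bag so that their occurrence sets stay connected. Neither issue matters, since any computable bound in $\tw(H)+r$ suffices.
\item Your remark about $r\le n$ is not needed for the running time: a parameterized reduction may take $f(r)\cdot n^{\bigoh(1)}$ time.
\item A version of that remark is needed for correctness in one edge case. If \ec requires all $r$ colors to be used, then for $r>n$ we have $\lfloor n/r\rfloor=0$, and $\instJ$ is a yes-instance while $\instI$ is not. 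Such inputs should therefore be mapped to a fixed no-instance.
\end{enumerate}
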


\subsection{Algorithms}
\label{sec:dsq-bounded-treewidth-algorithms}
Using standard dynamic programming over tree decompositions, similar to that for \ds (see~\cite{cygan2015parameterized}), we first establish that \dsq is \fpt parameterized by $\tw + k$.
\begin{theorem}
  \label{thm:dsq-fpt-ktw}
  Given a tree decomposition of $G$ of width $\tw$, \dsq can be solved in time $2^\tw (k+1)^{2\tw} \cdot n^{\bigoh(1)}$.
\end{theorem}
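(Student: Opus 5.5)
We will run a standard dynamic program over a nice tree decomposition of width $\tw$. Such a decomposition has $\bigoh(\tw\cdot n)$ nodes and can be obtained from the given one in polynomial time. We use introduce-vertex, forget-vertex and join nodes, plus introduce-edge nodes so that every edge is introduced exactly once. For a node $t$ with bag $X_t$, let $V_t$ be the set of vertices in the subtree below $t$, and let $G_t$ be the graph on $V_t$ containing only the edges introduced below $t$.

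\textbf{States.} A state at $t$ consists of three pieces of data:
\begin{itemize}
\item a set $A \subseteq X_t$, meaning $A = S\cap X_t$;
\item a counter function $g\colon X_t \to [0,k]$, recording $|N_{G_t}[v]\cap S|$ for each bag vertex $v$ with respect to the partial solution;
\item an integer $s\in[0,k]$, the size of the partial solution.
\end{itemize}
Capping counts at $k$ loses nothing: any vertex with $\fupq(v)\ge k$ effectively has upper quota $k$, since $|S|\le k$.

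The table entry $D[t,A,g,s]$ is true iff there is an $S\subseteq V_t$ with $|S|=s$ and $S\cap X_t=A$ satisfying two conditions. First, every forgotten vertex $v\in V_t\setminus X_t$ has $\floq(v)\le|N_G[v]\cap S|\le\fupq(v)$; this is legitimate because all neighbours of a forgotten vertex already lie in $V_t$, and all its edges have been introduced. Second, $g(v)=|N_{G_t}[v]\cap S|$ for each $v\in X_t$.

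Two counting observations give the bound. The number of pairs $(A,g)$ is at most $2^{|X_t|}(k+1)^{|X_t|}$. One can refine this, as in the statement, by noting that for $v\in A$ the counter is at least $1$; the exponent $2\tw$ in the theorem comfortably absorbs our count anyway. Moreover $(k+1)^{\tw+1}\le(k+1)^{2\tw}$ for $\tw\ge1$, with the extra factors polynomial. So the table has $2^{\tw}(k+1)^{2\tw}n^{\bigoh(1)}$ entries per node.

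\textbf{Transitions.} Each node type is handled as follows.
\begin{itemize}
\item \emph{Leaf node}: only the empty state with $s=0$ is true.
\item \emph{Introduce vertex $v$} (no edges yet): either $v\notin A$ with $g(v)=0$, or $v\in A$ with $g(v)=1$ and $s$ incremented.
\item \emph{Introduce edge $uv$}: if $u\in A$, increment $g(v)$; if $v\in A$, increment $g(u)$. Reject any state where a counter exceeds $k$.
\item \emph{Forget $v$}: take the OR over the values of $g(v)$ in $[\floq(v),\fupq(v)]$ and over whether $v\in A$. This is the only place quotas are checked, and it is correct because $v$ has then received all its edges.
\item \emph{Join node} with children $t_1,t_2$ sharing the bag: the partial solutions agree on $A$, so
\[
g(v)=g_1(v)+g_2(v)-[v\in A],\qquad s=s_1+s_2-|A|.
\]
\end{itemize}

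The main cost, and the step needing care, is the join. A naive computation enumerates, for each target $(A,g,s)$, all splits $g=g_1+g_2-\mathbb{1}_A$. Since each split is determined by $g_1$, this costs $(k+1)^{|X_t|}$ per state. The total is then $2^{\tw}(k+1)^{\tw}\cdot(k+1)^{\tw}\cdot n^{\bigoh(1)}$, which is exactly the claimed $2^\tw(k+1)^{2\tw}n^{\bigoh(1)}$ bound: the second factor $(k+1)^{\tw}$ comes precisely from this enumeration. No convolution trick is needed here; that is what the later improvement addresses.

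The answer is read at the root, where we take the bag to be empty after forgetting all vertices: the instance is a yes-instance iff $D[\mathrm{root},\emptyset,\emptyset,s]$ is true for some $s\le k$.

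Correctness follows by the usual induction on the tree. Soundness holds because every forgotten vertex was quota-checked with its final count. Completeness holds because restricting a global solution to $V_t$ yields a consistent state at every node.
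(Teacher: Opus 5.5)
Your proposal is correct and follows the paper's proof essentially step for step. It uses the same nice tree decomposition with introduce-edge nodes, the same state $(S\cap X_t, f)$ with $f\colon X_t\to[0,k]$, and the same recurrences: quotas are checked only at forget nodes, and joins use $f=f_1+f_2-[v\in S]$. It also has the same $(k+1)^{|X_t|}$ enumeration per join state that yields the $2^{\tw}(k+1)^{2\tw}$ bound. The only difference is cosmetic: you keep a Boolean table indexed by the partial-solution size $s$ instead of the paper's minimum size $c[t,f,S]$. This costs only a polynomial factor; at join nodes you must also enumerate $s_1$, an extra factor $k+1$ you did not mention, but it is absorbed in $n^{\bigoh(1)}$.
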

\begin{proof}
  Towards streamlining dynamic programming algorithms on graphs of bounded treewidth, a tree decomposition $(T, \{X_t\}_{t \in V(T)})$ is defined to be \textit{nice} if it satisfies the following conditions:
  \begin{itemize}[itemsep=0pt]
    \item $T$ is rooted at a vertex $r$ and the bags corresponding to the root and the leaves are empty.
    \item each non-leaf node is of one of the following types:
    \begin{enumerate}[itemsep=0pt]
      \item \textbf{Introduce vertex node}: a node $t$ with exactly one child $t'$ such that $X_t = X_{t'} \cup \{v\}$ for some $v \notin X_{t'}$. The vertex $v$ is \textit{introduced} at $t$.
      \item \textbf{Introduce edge node}: a node $t$ labelled with an edge $uv \in E(G)$ such that $\{u,v\}\subseteq X_t$, with exactly one child $t'$ such that $X_t = X_{t'}$. The edge $uv$ is \textit{introduced} at $t$. Every edge of $E(G)$ is introduced exactly once in the whole decomposition.
      \item \textbf{Forget vertex node}: a node $t$ with exactly one child $t'$ such that $X_t = X_{t'} \setminus \{v\}$ for some $v \in X_{t'}$. The vertex $v$ is \textit{forgotten} at $t$.
      \item \textbf{Join node}: a node $t$ with exactly two children $t_1$ and $t_2$ such that $X_t = X_{t_1} = X_{t_2}$.
    \end{enumerate}
  \end{itemize}
  It is known that given a tree decomposition of $G$, a nice tree decomposition of $G$ of the same width can be computed in polynomial time \cite{cygan2015parameterized}.
  From now on, we will work on a nice tree decomposition $\mathcal{T} = (T,\{X_t\}_{t \in V(T)})$ of $G$ of width $\tw$.
  For a node $t \in V(T)$, let $V_t$ and $E_t$ denote the set of vertices and edges in the bags corresponding to the subtree rooted at $t$ and let $G_t$ denote the subgraph of $G$ containing vertices $V_t$ and edges $E_t$ (i.e., $G_t = G[V_t, E_t]$).
  
  For each node $t \in V(T)$, function $f\colon X_t \rightarrow [0,k]$ (to capture the number of times each vertex in $X_t$ is dominated by a hypothetical solution), $S \subseteq X_t$ (to capture the intersection of a solution with $X_t$), we compute $c[t,f,S]$: the cardinality of a minimum subset of vertices $D$ in $G_t$ such that $D$ dominates $V_t \setminus X_t$ properly (i.e., all vertices in the subtree $V_t$ but not in the bag $X_t$), $D \cap X_t = S$ and for each $v \in X_t$, $|D \cap N[v]|=f(v)$.
  Then the cardinality of a minimum \dsq in $G$ is $c[r,\emptyset, \{\}]$, where $\emptyset$ is the empty function.
  
  Next, we will see how the entries of the table $c$ are computed recursively.
  We will traverse the tree decomposition in a bottom-up manner.
  The leaves form the base case, and the correctness of the computation at a node $t$ will be proved using induction on the \textit{node height} (maximum distance between the node and a leaf).
  The entries are computed as follows depending on the type of $t$.
  
  \noindent$\bullet$ \textbf{Leaf node:}
  Since $X_t=\{\}$, we have $c[t,\emptyset,\{\}]=0$.
  
  Next, we will focus on computing an entry $c[t,f,S]$ for some $t \in V(T), f\colon X_t \rightarrow [0,k],$ and $S \subseteq X_t$. 
  For a subset $Y\subseteq X_t$, by $f|_Y$, we denote the restriction of $f$ to $Y$.
  For a vertex $v \in X_t$ and a value $i \in [0,k]$, by $f_{v \rightarrow i}$ we denote the function which takes the value $i$ at $v$ and the same value as $f$ at other points. That is, 
  \[
  f_{v \rightarrow i}(x) = 
  \begin{cases}
    i, & x=v;\\
    f(x), & x\ne v.
  \end{cases}
  \]
  
  \noindent$\bullet$ \textbf{Introduce vertex node:}
  Suppose that the vertex $v$ is introduced at $t$ and the child of $t$ is $t'$.
  Then,
  \begin{equation}
    \label{eq:int-ver-node}
    c[t,f,S] = 
      \begin{cases}
        \begin{alignedat}{3}
          c[t',f|_{X_{t'}}, S],                   &\quad v\notin S, f(v)=0;   \\
          \infty,                                 &\quad v\notin S, f(v)\ne0; \\
          \infty,                                 &\quad v\in S, f(v)\ne1;    \\
          1+c[t',f|_{X_{t'}}, S\setminus\{v\}],   &\quad v\in S, f(v)=1.
        \end{alignedat}
      \end{cases}
    \end{equation}
  The correctness follows from the fact that $v$ is an isolated vertex in $G_t$.
  
  \noindent$\bullet$ \textbf{Introduce edge node:}
  Suppose that the edge $uv$ is introduced at $t$ and the child of $t$ is $t'$.
  Then, 
  \begin{equation}
    \label{eq:int-edg-node}
    c[t,f,S] \!=\! 
      \begin{cases}
        \begin{alignedat}{2}
          c[t',f, S],                           &\quad u \notin S, v \notin S;                        \\
          \infty,                               &\quad u \in S,    v \notin S, f(v) = 0;            \\
          c[t',f_{v\rightarrow f(v)-1}, S],     &\quad u \in S,    v \notin S, f(v) \ne 0;          \\
          \infty,                               &\quad u \notin S, v \in S, f(u) = 0;            \\
          c[t',f_{u\rightarrow f(u)-1}, S],     &\quad u \notin S, v \in S, f(u) \ne 0;          \\
          \infty,                               &\quad u \in S,    v \in S, \min(f(u),\!f(v)) \! = \! 0; \\
          \!c[t'\!,\!f_{u\rightarrow f(u)-1, v\rightarrow f(v)-1}, S],     &\quad u \in S, v \in S, \min(f(u),\!f(v)) \! \ne \! 0.
            \end{alignedat}
      \end{cases}
  \end{equation}
  The correctness follows from the fact that the introduction of $uv$ at $t$ only affects the domination of $u$ and $v$.

\noindent$\bullet$ \textbf{Forget vertex node:}
Suppose that the vertex $v$ is forgotten at node $t$ and $t'$ is the child of $t$.
Then,
\begin{equation}
  \label{eq:for-node}
  c[t,f,S] \! = \!
  \min\limits_{i \in [\floq(v),\fupq(v)]} 
    \min \left( 
      c[t',f_{v \rightarrow i},S], c[t',f_{v \rightarrow i}, S \cup \{v\}] 
  \right)
\end{equation}
Any solution corresponding to $c[t,f,S]$ dominates $v$ exactly $i$ times for some $i \in [\floq(v),\fupq(v)]$.
Then any solution corresponding to either $c[t',f_{v \rightarrow i},S]$ or $c[t',f_{v \rightarrow i}, S \cup \{v\}]$ forms a feasible solution corresponding to $c[t,f,S]$ when $i \in [\floq(v),\fupq(v)]$.
Thus, we infer LHS $\le$ RHS in \Cref{eq:for-node}.
For the other direction, let $D$ be any solution corresponding to $c[t,f,S]$. 
Suppose that $D$ dominates $v$ exactly $i$ times for some $i \in [\floq(v),\fupq(v)]$.
If $D$ contains $v$, then it forms a feasible solution corresponding to $c[t',f_{v \rightarrow i}, S \cup \{v\}]$.
Otherwise, it forms a feasible solution corresponding to $c[t',f_{v \rightarrow i},S]$.
Then we have, $c[t,f,S] \ge \min(c[t',f_{v \rightarrow i},S], c[t',f_{v \rightarrow i}, S \cup \{v\}])$.
Consequently we infer LHS $\ge$ RHS in \Cref{eq:for-node}. 

\noindent$\bullet$ \textbf{Join node:}
Suppose that $t$ is a join node with children $t_1$ and $t_2$. 
Two mappings $f_1$ and $f_2$ are said to be \textit{compatible} (comp., in short) with $f$ if for each $v \in X_t$,
$f(v)=f_1(v)+f_2(v)-|\{v\} \cap S|$.
Then,
\begin{align}
  \label{eq:join-node}
  c[t,f,S] = \min\limits_{\text{comp. mappings }f_1,f_2} \left( c[t_1,f_1,S]+c[t_2,f_2,S]-|S| \right).
\end{align}
Consider any solutions $D_1,D_2$ corresponding to $c[t_1,f_1,S]$ and $c[t_2,f_2,S]$ resp.
Since $f_1$ and $f_2$ are comp., $D_1 \cup D_2$ is a feasible solution corresponding to $c[t,f,S]$.
Any vertex $v \notin X_t$ is dominated properly by $D_1 \cup D_2$.
Any vertex $v \in X_t$ is dominated exactly $f_1(v)+f_2(v) -|\{v\} \cap S|= f(v)$ times by $D_1 \cup D_2$.
Thus, we have LHS $\le$ RHS in \Cref{eq:join-node}.
For the other direction, consider any solution corresponding to $c[t,f,S]$.
Let $D_1 = D \cap V_{t_1}$ and $D_2 = D \cap V_{t_2}$.
By the property of $D$, any vertex $v \in V_{t_1} \setminus X_{t_1}$ is dominated properly by $D_1$ and let $f_1 \colon X_{t_1} \rightarrow [0,k]$ be the function that captures how many times each vertex in $X_{t_1}$ is dominated by $D_1$.
Let $D_2$ and $f_2 \colon X_{t_2} \rightarrow [0,k]$ be defined analogously.
For each $v \in X_t$, $v$ is dominated exactly $f_1(v)+f_2(v)-|\{v\} \cap S|$ times.
Thus,
\begin{align*}
\begin{aligned}
  c[t,f,S]  &= |D| = |D_1|+|D_2|-|D_1 \cap D_2| \\
  &= |D_1|+|D_2|-|S| \\
  &\ge c[t_1,f_1,S]+c[t_2,f_2,S]-|S| \\
  &\ge \min_{\text{comp. mappings }f_1,f_2} \left( c[t_1,f_1,S]+c[t_2,f_2,S]-|S| \right).
\end{aligned}
\end{align*}
  Given a function $f:X_t\rightarrow [0,k]$, the number of pairs of compatible mappings $f_1, f_2$ is
  bounded by $(k+1)^{|X_t|} \le (k+1)^{\tw+1}$.
  Thus, the time it takes to compute $c[t,f,S]$ for a given $t,f,S$ is bounded by $(k+1)^{\tw} \cdot n^{\bigoh(1)}$.
  
  The number of mappings $f$ is $(k+1)^{\tw}$.
  The number of subsets $S$ corresponding to a bag in $T$ is at most $2^{\tw}$.
  The total number of entries in table $c$ is $n^{\bigoh(1)} \cdot (k+1)^{\tw} \cdot 2^{\tw}$.
  Thus, the overall running time of the DP algorithm is $n^{\bigoh(1)} \cdot (k+1)^{2\tw} \cdot 2^{\tw} = 2^{\bigoh(\tw \log k)} \cdot n^{\bigoh(1)}$ and we have our desired result.
\end{proof}

Using a ``subset convolution'' like method in the join node, we can get an improvement in our algorithm. 
Let $\hat{u} = \max_{v \in V(G)} \fupq(v)$, the maximum upper-quota of a vertex. 
  Since we are interested in finding a solution of size at most $k$, we can assume that $\hat{u} \le k$ without loss of generality (each vertex can be dominated at most $k$ times).
  We would like to note that if we change the set of states in a node $t$ in the DP table by considering the functions $f:X_t \rightarrow [\hat{u}]$ instead of $f: X_t \rightarrow [k]$, we get an algorithm for \dsq that runs in time $2^{\tw}(\hat{u}+1)^{2\tw} \cdot n^{\bigoh(1)}$.
  If $\hat{u}$ is a fixed constant, then we can use the techniques of \cite{van2021generic} to obtain an algorithm for computing all the entries in the join node that runs in total time $2^{\tw}(\hat{u}+1)^{\tw} \cdot n^{\bigoh(1)} = (2\hat{u}+2)^{\tw} \cdot n^{\bigoh(1)}$.
  If $\hat{u}$ is not a fixed constant, then the techniques of \cite{van2021generic} run in time $((2\hat{u}+2)^{\tw} + 2^{\bigoh(\hat{u})}) \cdot n^{\bigoh(1)}$ (see Remark 4.20 in \cite{focke2022tight}).
  We discuss this next.

\subsubsection{On computing the entries at join nodes}
\label{subsec:computing-join-node-entries}
Consider the algorithm for \dsq given in the proof of \Cref{thm:dsq-fpt-ktw}.
Suppose that $t$ is a join node with children $t_1$ and $t_2$. 
For each function $g:X_t \rightarrow [0,\hat{u}]$ (to capture the number of times each vertex in $X_t$ is dominated by the forgotten vertices in a hypothetical solution (i.e., vertices in the solution other than S)), integer $q \in [0,n]$ (to capture the size of the solution), $S \subseteq X_t$ (to capture the intersection of a solution with $X_t$) we will compute $c'[t,g,S,q]$: it will be greater than zero if there is a solution $D$ for $G_t$ such that (1) each vertex $v \in X_t$ is dominated $g(v)$ times by $D \setminus X_t$, (2) $D \cap X_t = S$, and (3) $|D| = q$.
By computing the entries of the table $c'$, we also obtain the entries of the table $c$: $c[t,f,S]$ is the minimum $q$ such that $c'[t,g,S,q]>0$, where $g(v) = f(v)-|N_{G_t}[v] \cap S|$ for each $v \in X_t$.

Two mappings $g_1$ and $g_2$ are said to be \textit{compatible} (comp., in short) with $g$ if for each $v \in X_t$,
$g(v)=g_1(v)+g_2(v)$;
as a shorthand, we write $g=g_1+g_2$.
Consider the functions $f_1: X_{t_1} \rightarrow [0,\hat{u}]$ and $f_2: X_{t_2} \rightarrow [0,\hat{u}]$ where $f_1(v) = g_1(v)+|N_{G_{t_1}}[v] \cap S|$ and similarly $f_2(v) = g_2(v)+|N_{G_{t_2}}[v] \cap S|$ for each $v \in X_t$.
For an integer $q_1$, let $c'[t,g_1,S,q_1]= 1$ if $c[t,f_1,S] = q_1$ and $0$ otherwise. 
Similarly, for an integer $q_2$, let $c'[t,g_2,S,q_2]= 1$ if $c[t,f_2,S] = q_2$ and $0$ otherwise. 

Then,
\begin{align}
  \label{eq:new-join-node}
  c'[t,g,S,q] \!=\! \sum\limits_{\substack{g_1,g_2 \\ g=g_1+g_2}} \sum\limits_{\substack{q_1,q_2 \in [0,n] \\ q_1+q_2=q+|S|}} c'[t_1,g_1,S,q_1] ~ c'[t_2,g_2,S,q_2].
\end{align}
For the correctness, we claim that $c'[t,g,S,q] >0$ if and only if there is a solution $D$ for $G_t$ such that (1) each vertex $v \in X_t$ is dominated $g(v)$ times by $D \setminus X_t$, (2) $D \cap X_t = S$, and (3) $|D| = q$.
For the forward direction, consider any $g_1, g_2, q_1, q_2$ such that $c'[t_1,g_1,S,q_1] = c'[t_2,g_2,S,q_2] = 1$ and $g_1, g_2$ are comp. with $g$ and $q_1+q_2=q+|S|$ (their existence is guaranteed because $c'[t,g,S,q]>0$).
Let $D_1, D_2$ be the solutions corresponding to $c'[t_1, g_1, S, q_1]$ and $c'[t_2,g_2,S,q_2]$ respectively.
Since $g_1$ and $g_2$ are comp. and $q_1+q_2=q+|S|$, $D_1 \cup D_2$ is a feasible solution corresponding to $c'[t,g,S,q]$.
Any vertex $v \notin X_t$ is dominated properly by $D_1 \cup D_2$.
Any vertex $v \in X_t$ is dominated exactly $g_1(v)+g_2(v) = g(v)$ times by $(D_1 \cup D_2) \setminus X_t$.

For the other direction, suppose that $c[t,g,S,q]>0$ and consider any solution corresponding to $c[t,g,S,q]$.
Let $D_1 = D \cap V_{t_1}$ and $D_2 = D \cap V_{t_2}$.
By the property of $D$, any vertex $v \in V_1 \setminus X_{t_1}$ is dominated properly by $D_1$ and let $g_1:X_{t_1} \rightarrow [0,\hat{u}]$ be the function that captures how many times each vertex in $X_{t_1}$ is dominated by $D_1 \setminus S$.
Let $D_2$ and $g_2:X_{t_2} \rightarrow [0,\hat{u}]$ be defined analogously.
For each $v \in X_t$, we have $g_1(v)+g_2(v)=g(v)$.
Since $D_1 \cap D_2 = S$, we have $q_1+q_2 = q+|S|$.
Thus, we have that $c'[t_1, g_1, S, q_1] = c'[t_2, g_2, S, q_2] = 1$.
Consequently, $c'[t,g,S,q]>0$.

For each $w \in [0, \hat{u}(\tw+1)]$ (to capture the total number of times vertices in $X_t$ are dominated by forgotten vertices), we define $c'[t,g,S,q,w] = 0$ if $w \ne \sum_{v \in X_t} g(v)$ and otherwise
\small
\begin{align}
  \label{eq:newer-join-node}
  \begin{split}
  c'[t,g,S,q,w] \!=\!\!\!\!\!\! \sum_{\substack{g_1, g_2 \\ g \equiv g_1+g_2}} \sum_{\substack{q_1,q_2 \in [0,n] \\ q_1+q_2=q+|S|}} \sum_{\substack{w_1,w_2 \in [0,\hat{u}(\tw+1)] \\ w_1+w_2=w}} 
  \!\!\!\left(c'[t_1,g_1,S,q_1,w_1] ~ c'[t_2,g_2,S,q_2,w_2]\right),
  \end{split}
\end{align}
\normalsize
where by $g \equiv g_1+g_2$ we mean that the two sides are equal coordinate-wise modulo $(\hat{u}+1)$.
We claim that $c'[t,g,S,q,w] = c'[t,g,S,q]$ if $w = \sum_{v \in X_t} g(v)$.
It is sufficient to show that the RHS of \Cref{eq:new-join-node} and \Cref{eq:newer-join-node} are equal if $w = \sum_{v \in X_t} g(v)$.
Since $g=g_1+g_2$ implies that $g \equiv g_1+g_2$, we have that the RHS of \Cref{eq:new-join-node} is at most the RHS of \Cref{eq:newer-join-node}.
For the other direction, if the RHS of \Cref{eq:new-join-node} is strictly lesser than the RHS of \Cref{eq:newer-join-node}, then there is some $g_1, g_2$ such that $g \equiv g_1+g_2$ but $g \ne g_1+g_2$.
But since $g, g_1, g_2$ are functions from $X_t$ to $[0,\hat{u}]$, this implies that $w = \sum_{v \in X_t} g(v) < \sum_{v \in X_{t_1}} g_1(v) + \sum_{v \in X_{t_2}} g_2(v) = w_1+w_2$, a contradiction since $w=w_1+w_2$.

To compute $c'[t,g,S,q,w]$, we use the result of \cite{van2021generic}.
For a fixed choice of $t,S,q,w$, the table entries corresponding to all functions $g$ can be computed in time $((\hat{u}+1)^\tw+ 2^{\bigoh(\hat{u})})\cdot n^{\bigoh(1)}$ \cite{van2021generic,focke2022tight}.
Thus all the entries at a join node can be computed in time $((2\hat{u}+2)^\tw+ 2^{\bigoh(\hat{u})})\cdot n^{\bigoh(1)}$ (the last term $2^{\bigoh(\hat{u})}$ is for computing an appropriate prime and this step can be assumed to occur once in the entire algorithm; see Remark 4.20 in \cite{focke2022tight}).

\section{\dsq on graphs of bounded degeneracy}
\label{sec:simpler-dsq-3deg-graphs}

The degeneracy of a graph $G$ is the smallest number $d$ such that any subgraph of $G$ has a vertex of degree at most $d$.
Note that the degeneracy is bounded by the maximum degree.
In this section, we show the W[1]-hardness of \dsq even in constant degeneracy graphs by a parameterized reduction from the \textsc{Independent Set} problem (IS): given a graph $H$ and an integer $r$, does there exist a set $S \subseteq V(H)$ such that $S$ is an $r$-sized independent set in $H$ (i.e., $H[S]$ is edgeless)?


The \is problem parameterized by $r$ is W[1]-hard \cite{cygan2015parameterized}. 
The basic idea of the following reduction is to express the fact that any independent set contains at most one endpoint of an edge as a domination upper-quota.
Given an instance $\instI=(H,r)$ of $\is$, we construct an instance $\instJ=(G,k,\floq,\fupq)$ of \dsq in polynomial time as given in \Cref{reduction1}. 
See \Cref{fig:simpler-dsq-w-hardness-reduction-degen} for an illustration.
We would like to mention that the basic construction in the reduction is similar to that in a reduction presented in \cite{meybodi2020parameterized}, which establishes the hardness of a different related problem called $[1,j]$-\textsc{Dominating Set} on graphs of degeneracy $j+1$.

\begin{algorithm}[t]
  \caption{Reduction from \is to \dsq}
  For each vertex $v\in V(H)$, create a vertex $x_v$ in $G$ and set $\floq(x_v)\!=\!0$ and $\fupq(x_v)\!=\!1$.

  For each edge $uv \in E(H)$, create a vertex $y_{uv}$ with $\floq(y_{uv}) = 0$, $\fupq(y_{uv})=1$, and make it adjacent to $x_u$ and $x_v$.

  Create a vertex $\alpha$ (to ``force'' $r$ vertices in $H$ to be picked) that is adjacent to all of $\{x_v\}_{v \in V(H)}$, with $\floq(\alpha)=\fupq(\alpha)=r$.
  
  Create a pendant vertex $p^{\alpha}$ adjacent to $\alpha$, with $\floq(p^{\alpha}) \!=\! \fupq(p^{\alpha}) \!=\! 0$.
  
  Set $k = r$.
  \label{reduction1}
\end{algorithm}
\begin{figure}[t]
  \centering
  \includegraphics[width=0.8\textwidth]{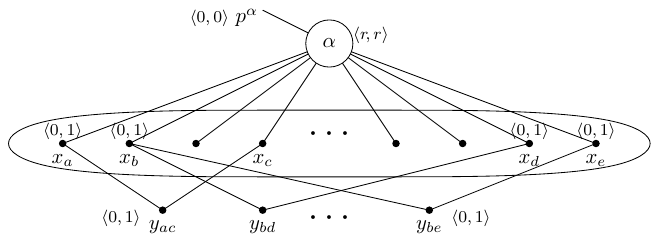}
  \caption{ The instance $\instJ$ constructed in the reduction from \is}
  \label{fig:simpler-dsq-w-hardness-reduction-degen}
\end{figure}

\begin{lemma}
  $\instI$ is a yes-instance of \is if and only if $\instJ$ is a yes-instance of \dsq.
\end{lemma}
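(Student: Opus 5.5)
We prove the two directions separately, using the quotas of Reduction~\ref{reduction1} directly.

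\textbf{Forward direction.} Suppose $I \subseteq V(H)$ is an independent set of size $r$, and set $S = \{x_v : v \in I\}$, so $|S| = r = k$. We verify every quota.
\begin{itemize}
\item The pendant $p^\alpha$ has $N[p^\alpha] = \{p^\alpha,\alpha\}$, which misses $S$. Its count is $0$, as required by $\langle 0,0\rangle$.
\item The vertex $\alpha$ is not in $S$ and is adjacent to all $x_v$, so $|N[\alpha]\cap S| = r$, meeting $\langle r,r\rangle$.
\item A vertex $x_v$ has $N[x_v] = \{x_v,\alpha\}\cup\{y_{uv} : uv \in E(H)\}$, and only $x_v$ itself can lie in $S$. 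Its count is at most $1$, meeting $\langle 0,1\rangle$.
\item A vertex $y_{uv}$ has $N[y_{uv}] \cap S \subseteq \{x_u,x_v\}\cap S$. Since $I$ is independent, at most one of $u,v$ lies in $I$, so the count is at most $1$.
\end{itemize}
Hence $S$ is a solution.

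\textbf{Backward direction.} Let $S$ be a solution with $|S| \le r$.

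First, the quota $\langle 0,0\rangle$ of $p^\alpha$ gives $p^\alpha,\alpha \notin S$. This pendant is present exactly to exclude $\alpha$; otherwise $\alpha$ could count itself toward its own quota.

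Second, $N[\alpha] = \{\alpha\}\cup\{x_v : v\in V(H)\}$ and $\alpha\notin S$. The quota $\langle r,r\rangle$ therefore forces exactly $r$ vertices of the form $x_v$ into $S$. As $|S|\le r$, we get $S = \{x_v : v \in I\}$ for some $I \subseteq V(H)$ with $|I| = r$, and no $y$-vertex lies in $S$.

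Third, for each edge $uv\in E(H)$, the upper quota $1$ of $y_{uv}$ together with $y_{uv}\notin S$ means at most one of $x_u,x_v$ is in $S$. So $I$ contains no edge, and $I$ is an independent set of size $r$.

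The argument is routine. The only delicate point is the second step: we must first exclude $\alpha$ (via $p^\alpha$) and then use the budget $k=r$ to exclude the $y$-vertices.
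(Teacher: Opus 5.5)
Your proof is correct and follows the same route as the paper's. In the forward direction you check every quota for $S=\{x_v : v\in I\}$. In the backward direction you use the quota $\langle 0,0\rangle$ of $p^\alpha$ to exclude $\alpha$, the quota $\langle r,r\rangle$ of $\alpha$ to force exactly $r$ vertices $x_v$, and the upper quota $1$ of each $y_{uv}$ to obtain independence. Your use of the budget $k=r$ to rule out $y$-vertices is a tidy-up the paper leaves implicit; it is not strictly needed, since $y_{uv}\in S$ would already forbid both $x_u$ and $x_v$.
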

\begin{proof}
  Suppose that $S=\{v_i\}_{i \in [r]}$ is an independent set in $\instI$. 
  We claim that $S' = \{x_{v_i}\}_{i \in [r]}$ is a solution in $\instJ$.
  We prove it by showing that each vertex in $V(G)$ is properly dominated by $S'$.
  \begin{itemize}[itemsep=0pt,leftmargin=*]
    \item[$\bullet$] The pendant $p^{\alpha}$ (with quota $\langle0,0\rangle$) is not dominated.
    \item[$\bullet$] The vertex $\alpha$ (with quota $\langle r, r \rangle$) is dominated by $r$ vertices: by $\{x_{v_i}\}_{i \in [r]}$.
    \item[$\bullet$] Consider a vertex $x_v \in V(G)$ (with quota $\langle0,1\rangle$). 
    If $v \in S$ then $x_v$ is dominated once: by itself; otherwise $x_v$ is not dominated.
    \item[$\bullet$] Since $S$ is an independent set in $H$, we have that $|\{u,v\} \cap S| \le 1$ for each $uv \in E(H)$.
    Thus, for each $uv \in E(H)$, the vertex $y_{uv}$ is dominated at most once by either $x_u$ or $x_v$.
  \end{itemize}
  
  For the other direction, suppose that $S'$ is a solution in $\instJ$.
  Firstly, observe that $S'$ does not contain $\alpha$: otherwise the pendant $p^\alpha$ (with quota $\langle 0,0 \rangle$) is dominated once.
  Since $\alpha$ has quota $\langle r,r \rangle$, we have that $S'$ contains exactly $r$ vertices from $\{x_v\}_{v \in V(H)}$.
  Let $S = S' \setminus \{\alpha\}$.
  Since each vertex $y_{uv} \in V(G)$ has quota $\langle0,1\rangle$, we have that at most one of $x_u$ or $x_v$ is in $S$.
  Therefore, the vertices in $H$ corresponding to $S$ form an independent set of size $k=r$ in $\instI$.
\end{proof}

The following ordering of vertices (in which each vertex has degree at most $2$ into the vertices succeeding it) shows that $G$ has degeneracy $2$: $p^{\alpha}$, $\{y_{uv}\}_{uv \in E(H)}$, $\{x_v\}_{v \in V(H)}$, $\alpha$.
Overall, we have the following.
\begin{theorem}
  \label{thm:dsq-hard-3dgen}
  \dsq parameterized by $k$ is \textup{W[1]}-hard on graphs of degeneracy 2.
\end{theorem}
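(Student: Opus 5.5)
The theorem follows by combining the correctness lemma for \Cref{reduction1} with three further checks: the reduction runs in polynomial time, it is a parameterized reduction, and the constructed graph $G$ has degeneracy at most $2$. Since \is parameterized by $r$ is \W[1]-hard \cite{cygan2015parameterized}, these together give \W[1]-hardness of \dsq parameterized by $k$, restricted to degeneracy-$2$ graphs.

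First, the construction in \Cref{reduction1} creates $|V(H)|+|E(H)|+2$ vertices. Each quota is a fixed number at most $r$, and $k=r$. So $\instJ$ is computed in time polynomial in $|V(H)|+|E(H)|$, and the new parameter $k$ equals the old parameter $r$. Second, the preceding lemma shows that $\instI$ is a yes-instance of \is if and only if $\instJ$ is a yes-instance of \dsq. Together these make the construction a valid parameterized reduction.

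It remains to bound the degeneracy. I would use the ordering $p^{\alpha}$, then all $y_{uv}$, then all $x_v$, then $\alpha$, and check that every vertex has at most $2$ neighbours appearing later.
\begin{itemize}
  \item $p^{\alpha}$ has only $\alpha$ as a neighbour.
  \item Each $y_{uv}$ has exactly two neighbours, $x_u$ and $x_v$, both later.
  \item Each $x_v$ is adjacent to some $y$-vertices, which come earlier, and to $\alpha$. So it has one later neighbour; note $x$-vertices are not adjacent to each other.
  \item $\alpha$ is last, so it has no later neighbours.
\end{itemize}
Now take any nonempty subgraph $G'$ of $G$. The vertex of $G'$ that comes earliest in this ordering has all its $G'$-neighbours later in the ordering, so it has degree at most $2$ in $G'$. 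Hence $G$ is $2$-degenerate.

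There is no real obstacle here. The one point needing care is that the degeneracy bound holds for every instance produced by the reduction, independent of $H$; the ordering argument above does not depend on $H$, so it does. Finally, since graphs of degeneracy $2$ exclude $K_{3,3}$ as a subgraph, the same reduction also gives hardness on $K_{3,3}$-free graphs, as stated in the paper.
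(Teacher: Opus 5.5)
Your proposal is correct and is essentially the paper's proof: you combine the correctness lemma for the reduction from \textsc{Independent Set} with the same vertex ordering $p^{\alpha}$, then all $y_{uv}$, then all $x_v$, then $\alpha$, in which every vertex has at most two later neighbours. You also spell out what the paper leaves implicit: the construction has polynomial size, the parameter is preserved ($k=r$), and the earliest vertex of any subgraph under this ordering has degree at most $2$.
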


\section{\dsq on nowhere dense graphs}
\label{sec:dsq-nd}

In this section, we will show that \dsq is $\fpt(k)$ if the input graph belongs to a graph class that is \textit{nowhere dense}.
We briefly introduce the required definitions for our result.
For a more extensive presentation, we refer to the paper \cite{grohe2017deciding}, which contains the algorithmic meta-theorem of our interest.
A \textit{model} of a graph $H$ in $G$ is a function $\psi$ that maps vertices in $H$ to vertex disjoint connected subgraphs in $G$ (called \textit{branch sets}) such that $\psi(u)$ and $\psi(v)$ are adjacent if $uv \in E(H)$.
It is known that $H$ is a minor of $G$ if and only if there is a model of $H$ in $G$.
The graph $H$ is said to be an \textit{$r$-shallow minor} of $G$ if there is a model $\psi$ of $H$ in $G$ such that for each $v \in H$, the radius of $\psi(v)$ (that is, the minimum maximum distance from a vertex, given by $\min_{x \in \psi(v)}\max_{y \in \psi(v)} \mathrm{dist}(x,y)$) is at most $r$.
A class $\C$ of graphs is said to be \textit{nowhere dense} if there is a function $f\colon \mathbb{N} \rightarrow \mathbb{N}$ such that for all $r$, $K_{f(r)}$ is not an $r$-shallow minor of $G$ for all $G \in \C$.
We focus on the case where $f$ is computable: such classes are called \textit{effectively} nowhere dense.
Note that this is a reasonable restriction since the functions corresponding to all natural nowhere dense graph classes are computable \cite{grohe2017deciding}.
Next, we argue that \dsq is $\fpt(k)$ if the input graph is nowhere dense.
On a high level, we construct a new graph that belongs to a nowhere dense graph class and ``contains'' $G$ along with the quota information of each vertex, and use first-order model-checking on this graph to determine the existence of a solution.

\paragraph{FO model-checking.}
We follow the notation and terminology of Grohe et al.~\cite{grohe2017deciding}. 
A \textit{vocabulary} $\sigma$ is a finite set of relation symbols, each with a fixed arity.
A $\sigma$-structure consists of a set $A$ called the \textit{universe} and an interpretation of relations in $\sigma$ in $A$.
The interpretation is also referred to as coloring.
The \textit{atomic} formulas of $\sigma$ are of the form $x=y$ (checking equality) or $R(x_1, \dots, x_r)$, where $R \in \sigma$ is an $r$-ary relation and $x_1, \dots, x_r$ are variables.
The \textit{first-order} (FO, in short) formulas of $\sigma$, denoted by $\mathrm{FO}[\sigma]$, are obtained from atomic formulas using boolean connectives $\lor,\land,\lnot$ and quantifiers $\exists$ (existential), $\forall$ (universal).
A sentence $\varphi$ is a formula in which all variables are quantified.
Graphs may be viewed as ${E}$-structures, where $E$ is a binary relation symbol that captures the edges.
A colored-graph vocabulary consists of the binary relation symbol $E$ and possibly finitely many unary relation symbols. 
A $\sigma$-colored graph is a $\sigma$-structure whose ${E}$-restriction is a simple undirected graph.
In the FO \textit{model-checking} problem on a class ${\C}$ of graphs, we are given a graph ${G} \in {\C}$ and an FO-sentence $\varphi$. 
The goal is to determine whether ${G} \models \varphi$ (i.e., ${G}$ satisfies $\varphi$).
The following result by Grohe et al. \cite{grohe2017deciding}, which considers formulas with a free variable $x$ (i.e., $x$ is not quantified), implies that FO model-checking on ${\C}$ is $\fpt$ in formula size if ${\C}$ is nowhere dense.

\begin{proposition}[Theorem 8.1 in \cite{grohe2017deciding}]
  Let $\C$ be an effectively nowhere dense class of graphs. There is a computable function $f$ and an algorithm that, given $\epsilon >0$, a formula $\varphi(x) \in \mathrm{FO}(\sigma)$ for some colored-graph vocabulary $\sigma$ and a $\sigma$-colored graph $G \in \C$, computes the set of all $v \in V(G)$ such that $G \models \varphi(v)$ in time $f(|\varphi|, \epsilon) \cdot n^{1+\epsilon}$.

\end{proposition}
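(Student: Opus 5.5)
This statement is imported verbatim from Grohe, Kreutzer and Siebertz, so in the paper it is invoked as a black box. If I had to reconstruct its proof, I would follow their architecture: reduce global model-checking to local model-checking via Gaifman locality, then solve the local problem recursively with the splitter game characterization of nowhere denseness.

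\textbf{Step 1: reduce to a local question.} By Gaifman's theorem, $\varphi(x)$ is equivalent to a Boolean combination of basic local sentences and $r$-local formulas $\psi^{(r)}(x)$, with $r$ and the formula size bounded computably in $|\varphi|$. I would use a variant of locality that preserves quantifier rank, so that the recursion in Step 3 terminates. Evaluating the basic local sentences amounts to finding many pairwise far-apart vertices satisfying an $r$-local formula. This is handled by marking, for every vertex, whether it satisfies that local formula, and then running a greedy scattered-set argument. The argument works because nowhere dense classes are uniformly quasi-wide.

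\textbf{Step 2: cover the graph by bounded-radius pieces.} Nowhere dense classes admit sparse $r$-neighbourhood covers. These consist of clusters of radius $\bigoh(r)$ such that every $r$-ball lies inside some cluster, and every vertex lies in at most $n^{\epsilon}$ clusters. The covers are computable in time $f(r,\epsilon)\cdot n^{1+\epsilon}$. Each $\psi^{(r)}(v)$ can then be evaluated inside one cluster containing $N^r_G(v)$.

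\textbf{Step 3: recurse using the splitter game.} Inside a cluster of bounded radius, Splitter has a strategy in the $(\ell,m,r)$-splitter game. The number of rounds $\ell$ is bounded by a function of $r$ and the class. Splitter deletes one well-chosen vertex $w$, and we encode its adjacencies by fresh unary colours. The local formula is rewritten into an equivalent formula over the new colored vocabulary on $G - w$, again of bounded size, and we recurse. After $\ell$ rounds the remaining ball is empty and evaluation is trivial.

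Running-time accounting is the final step. The cover overlap contributes $n^{\epsilon}$, the recursion depth is bounded by a function of $|\varphi|$ and $\epsilon$, and all formula rewritings are computable, which yields $f(|\varphi|,\epsilon)\cdot n^{1+\epsilon}$.

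The main obstacle is keeping the formula size bounded across recursion levels. Each level re-applies Gaifman locality and introduces new colours, so a naive bookkeeping lets the radius and the vocabulary blow up. The remedy I would try first is the rank-preserving locality theorem, which bounds the radius by $7^{q}$ for quantifier rank $q$, combined with a careful choice of the splitter game parameters. This part is genuinely delicate, and for the present paper I would simply cite it.
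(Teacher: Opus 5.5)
Your proposal matches the paper, which imports this statement as Theorem 8.1 of Grohe, Kreutzer and Siebertz without proof and uses it only as a black box for Theorem~\ref{thm:dsq-fpt-nd}, so citing it is the entire argument. The reconstruction sketch is optional, but if you ever flesh it out, three points need more care: a greedy maximal scattered set only settles the case where it reaches $k$ elements (otherwise you must still search the fewer than $k$ bounded-radius balls it leaves, e.g.\ by the same splitter-game recursion); deleting Splitter's vertex needs colours recording distances to it, not just adjacencies, to keep distance atoms rank-free; and the $n^{\epsilon}$ cover overlap compounds over the recursion depth $\ell$, so the covers must be built with $\epsilon/\ell$.
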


Suppose that the input graph $G$ belongs to a nowhere dense graph class $\C$ such that each graph in $\C$ excludes $K_{f(r)}$ as an $r$-shallow minor for each $r \in \mathbb{N}$.
In particular, it excludes $K_{f(0)}$ as a subgraph (which is equivalently a 0-shallow minor).
The basic idea is to construct another nowhere dense graph $G'$ in which the quota information of each vertex is embedded in the graph structure itself; and then work on $G'$.
One can consider adding $2(k+1)$ different unary relations to the vocabulary for expressing the lower and upper-quotas of the vertices in the graph and then writing an FO formula based on that to solve \dsq 
\footnote{As pointed out by an anonymous reviewer, it appears that this approach of adding new unary relations can be used to obtain the more general result that \dsq is $\fpt(k)$ in monadically stable graph classes, where FO model checking is $\fpt$ in the formula size due to the result by Dreier et al., \cite{DBLP:conf/focs/DreierEMMPT24}. Since we are unclear about how adding unary relations affects the overall running time of the algorithm, we have refrained from doing that.}. 
But this results in different vocabularies for different values of $k$.
We present an algorithm using only the edge relation in the vocabulary that decides whether $G$ has a \dsq of size at most $k$.

For each vertex $v$, we attach a \textit{quota gadget} to it: $(v)$ $-$ $(K_{f(0)})$ $-$ (path on $\floq(v)+1$ vertices) $-$ $(K_{f(0)})$ $-$  (path on $\fupq(v)+1$ vertices) $-$ $(K_{f(0)})$, where the hyphens denote an edge incident to an arbitrary vertex in the complete graph).
See \Cref{fig:quota-gadget} for an example.
We call a vertex a \textit{gadget vertex} if it is part of a quota gadget, and call it an \textit{original vertex} otherwise.
Let $G'$ be the graph obtained from a copy of $G$ by attaching the quota gadgets to each vertex.
Since $K_{f(0)}$ is not a subgraph of any graph in $\C$, we have that $G' \notin \C$.
We claim that $G'$ belongs to a different nowhere dense graph class.

\begin{figure}[t]
  \centering
  \includegraphics[width=0.5\textwidth]{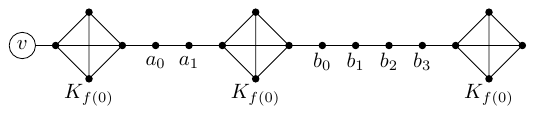}
  \caption{\label{fig:quota-gadget} The gadget for a vertex $v$ with quota $\langle 1,3 \rangle$ in an instance where $f(0)\!=\!4$}
\end{figure}

\begin{lemma}
  \label{lem:nowhere-dense}
  The graph $G'$ excludes $K_{g(r)}$ as an $r$-shallow minor for each $r \in \mathbb{N}$, where $g(r)=\max(f(r),3f(0)+2))$.
  Hence, it belongs to the nowhere dense graph class characterized by the function $g$.
\end{lemma}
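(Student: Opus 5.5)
Assume, for contradiction, that $K_{t}$ with $t = g(r)$ is an $r$-shallow minor of $G'$, witnessed by a model $\psi$ whose branch sets $\psi(a)$, $a \in V(K_t)$, are pairwise disjoint connected subgraphs of radius at most $r$ that are pairwise adjacent. The plan is to classify the branch sets by whether they touch original vertices, and to show that too many branch sets of either kind yields a contradiction. Since $t \ge f(r)$, we may aim to obtain a $K_{f(r)}$ as an $r$-shallow minor of $G$ itself, and since $t \ge 3f(0)+2$, we will use this bound to handle branch sets living inside gadgets.

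\textbf{Step 1: structure of a quota gadget.} Each gadget attached to $v$ is a chain of blocks (cliques $K_{f(0)}$ and paths) joined by single edges, and it meets the rest of $G'$ only via the edge from $v$ to the first clique. Hence every gadget vertex is separated from the rest of $G'$ by the cut vertex $v$, and the gadget is a block graph whose blocks have at most $f(0)$ vertices. A branch set consisting only of gadget vertices lies in a single gadget (it is connected and avoids $v$). A branch set containing an original vertex, if it also contains gadget vertices, can be trimmed: deleting its gadget vertices keeps it connected (the gadget hangs off cut vertices), keeps the radius at most $r$ (distances between original vertices never benefit from detours into pendant gadgets, and the center can be taken to be an original vertex or replaced by the attachment vertex at no cost), and loses only adjacencies to branch sets inside that same gadget.

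\textbf{Step 2: counting.} Call a branch set \emph{internal} if it consists only of gadget vertices. If two internal branch sets lie in different gadgets, they cannot be adjacent, since gadgets are only connected through original vertices. Hence all internal branch sets lie in one gadget $Q_v$, and every non-internal branch set adjacent to them must contain $v$ (the only vertex outside $Q_v$ with a neighbor in it); as branch sets are disjoint, at most one does. Thus, if there are at least two internal branch sets, then at most one non-internal branch set exists, and $K_{t-1}$ is a minor of the gadget $Q_v$. But a clique minor of a graph whose 2-connected blocks are cliques of size at most $f(0)$ or edges has size at most $f(0)$: any $K_s$ minor with $s\ge 3$ lies in one block, by the standard fact that clique minors of size at least three live inside a single block. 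This gives $t-1\le f(0)$, a contradiction with $t\ge 3f(0)+2$. (The generous $3f(0)+2$ leaves room for slack, e.g.\ if one prefers a cruder argument counting blocks per branch set.) Otherwise at most one branch set is internal, and the at least $t-1 \ge f(r)$ non-internal ones, after trimming as in Step 1, remain pairwise adjacent via edges of $G$: two trimmed sets were adjacent in $G'$ through an edge that cannot pass through a gadget, since an edge with a gadget endpoint has both endpoints in one gadget or joins it to $v$. This yields $K_{f(r)}$ as an $r$-shallow minor of $G \in \C$, a contradiction. (If $t-1 < f(r)$ is a worry, note $t=g(r)\ge f(r)$ and take $g$ with $+1$ slack, or observe that the single internal set can be discarded only when it exists.)

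\textbf{Main obstacle.} The delicate points are the trimming argument, namely that removing gadget vertices preserves both connectivity and the radius bound $r$ (the center might be a gadget vertex, in which case $v$ serves as a center with no larger eccentricity on the original part), and the off-by-one in the final count. I would first try to resolve the count by arguing that when exactly one internal set exists, it is adjacent only to the branch set containing $v$, which forces $t\le 2$.
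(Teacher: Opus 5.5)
Your proof is correct once you adopt the fix from your last paragraph. As written, ``at least $t-1\ge f(r)$ non-internal sets'' does not follow from $t=g(r)\ge f(r)$. The cleanest repair is to split on whether \emph{any} internal branch set exists:
\begin{itemize}
\item If one exists, say in $Q_v$, then every non-internal set must contain $v$. So at least $t-1\ge 3f(0)+1$ sets form a clique model inside $Q_v$, which contradicts the block bound $\max(f(0),2)$.
\item If none exists, all $t\ge f(r)$ sets trim to an $r$-shallow $K_t$-model in $G$.
\end{itemize}

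Your skeleton matches the paper's. Both separate branch sets trapped in a gadget from branch sets meeting original vertices, and both use the same cut-vertex trimming argument, which is the paper's final observation. The difference is how you handle the gadget case.
\begin{itemize}
\item The paper splits it into two observations: either no branch set meets an original vertex, or $v$ and one of its gadget vertices lie in different branch sets. It bounds both by counting: at most $3f(0)$ branch sets meet the three cliques, and since path vertices are cut vertices, no three path vertices lie in distinct branch sets, giving at most two further sets.
\item You merge those two cases into one. You replace the count by the fact that a clique minor on at least three vertices lives inside a single block.
\end{itemize}

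The paper's route avoids any appeal to the block structure of minors. However, its count only gives $\ell\le 3f(0)+2$, plus one for the set containing $v$ in the second observation. That sits at the threshold $g(r)$ rather than strictly below it. Your bound $\max(f(0),2)$ clears the threshold with room to spare, and it shows the $3f(0)+2$ term in $g$ is far from tight.
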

\begin{proof}
  Fix $r \! \in \! \mathbb{N}$.
  Let $H$ be an $r$-shallow minor of $G$ with a model $\psi$.
  Suppose that $H$ is isomorphic to $K_\ell$ for some $\ell \! \in \! \mathbb{N}$.
  We will argue that $\ell \le \max\{f(r),3f(0)\!+\!2\}$.
  
  \begin{observation}
    \label{obs:gadget-path}
    Let $p_1, p_2, p_3$ be any three gadget vertices corresponding to $v \in V(G)$ that do not belong to the three $K_{f(0)}$'s (i.e., they are in the paths between them) in the gadget. 
    They do not belong to three distinct branch sets in $\psi$.
  \end{observation}
  \begin{proof}
    Assume without loss of generality that $p_1$, $p_2$, and $p_3$ occur in that order in the gadget corresponding to $v$.
    Suppose that for each $i \in [3]$, $p_i$ occurs in branch set $\psi(q_i)$ for some $q_1, q_2, q_3 \in V(H)$. 
    Observe that $p_2$ separates $p_1$ from $p_3$.
    By construction, there is no edge between $\psi(q_1)$ and $\psi(q_3)$ since each branch set is connected in $G'$.
    Then, the model $\psi$ does not correspond to a complete graph, which is a contradiction.
  \end{proof}
  
  \begin{observation}
    \label{obs:no-original}
    If no branch set in $\psi$ contains an original vertex, then $\ell \! \le  \! 3 f(0) \! + \! 2$.
  \end{observation}
  \begin{proof}
    Since the gadgets corresponding to two distinct original vertices are disjoint, and the model $\psi$ corresponds to a clique, we have that the gadget vertices in the branch sets all belong to one original vertex.
    The total number of vertices in a gadget corresponding to an original vertex $v$ is $3 f(0) + \floq(v) + \fupq(v)$.
    By Observation~\ref{obs:gadget-path}, no three path vertices are in distinct branch sets.
    Thus, we have $\ell \le 3 f(0)+2$.
  \end{proof}
  
  \begin{observation}
    \label{obs:orig-gadg-diff}
    Suppose there is an original vertex $v$ such that $v$ is contained in a branch set and a gadget vertex corresponding to $v$ is contained in another branch set, then $\ell \le 3f(0)+2$.
  \end{observation}
  \begin{proof}
    Since $v$ and a gadget vertex corresponding to $v$ are in two distinct branch sets and $\psi$ corresponds to a clique, we have that the vertices in the branch sets other than the one containing $v$, are the gadget vertices corresponding to $v$.
    Thus, similar to the proof of Observation~\ref{obs:no-original}, we have $\ell \le 3 f(0)+2$.
  \end{proof}
  The only case left to argue is when an original vertex and its gadget vertices do not appear in separate branch sets (possibly do not appear in any branch set). 
  \begin{observation}
    \label{obs:allinone}
    Suppose that for all original vertices $v$, 
    the number of distinct branch sets containing $v$ or any gadget vertex corresponding to $v$ is at most one.
    Then, $\ell \le f(r)$.
  \end{observation}
  \begin{proof}
    Suppose that an original vertex $v$ is present in a branch set $\psi(w)$ for some $w \in V(H)$.
    Let $Q_v$ denote the gadget vertices corresponding to $v$.
    Consider the model $\psi_v$ of $H$ in $G'$ where $\psi_v(x)=\psi(x)\setminus Q_v$ if $x=w$, and $\psi_v(x)=\psi(x)$ otherwise.
    Observe that $\psi_v(w)$ is connected.
    Let $x,y \in \psi_v(w)$.
    The shortest path between $x$ and $y$ in $G'$ does not contain a vertex in $Q_v$.
    Thus, the radius of $\psi_v(w)$ is at most the radius of $\psi(w)$, which is bounded by $r$ because $H$ is an $r$-shallow minor.
    Therefore, $\psi_v$ is a model of $H$ in $G'$ with each branch set having radius at most $r$.
    Repeat the above procedure for each vertex that satisfy the observation statement and obtain a model $\hat{\psi}$.
    Observe that $\hat{\psi}$ is also a model of $H$ in $G$ with each branch set having radius at most $r$.
    Since $G \in \C$, the minor corresponding to $\hat{\psi}$, namely $H$, is not $K_{f(r)}$.
    Thus, we have $\ell < f(r)$.
  \end{proof}
  Combining Observations \ref{obs:gadget-path} to \ref{obs:allinone}, we have $\ell \le \max(3f(0)+2, f(r))$.
\end{proof}

Next, we use FO model-checking on $G'$ to solve \dsq in $G$.
Here, the universe $A$ is the vertices of the constructed graph $G'$ and $\sigma$ contains the binary relation $E$ that captures the edges in $G'$.
Since we have shown that $G'$ belongs to a nowhere dense graph class, we are left with showing the first order expressibility of the problem.
By construction and properties of $G$ and $G'$, we have the following.

\begin{lemma}
    \label{lem:gadget-correctness}
  A vertex $v$ is present in $G$ with quota $\langle \ell,u \rangle$ if and only if in $G'$, $v$ is attached to the following: a clique on $f(0)$ vertices, a path on $\ell+1$ vertices, a clique on $f(0)$ vertices, a path on $u+1$ vertices, a clique on $f(0)$ vertices; in the order as listed with each part connected by an edge.
\end{lemma}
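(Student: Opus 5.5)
The forward direction is immediate from the construction of $G'$. If $v$ is an original vertex with quota $\langle \ell,u\rangle$, its quota gadget is exactly the stated sequence: a $K_{f(0)}$, a path on $\ell+1$ vertices, a $K_{f(0)}$, a path on $u+1$ vertices and a $K_{f(0)}$, joined consecutively by single edges and attached to $v$. So the work lies in the converse, which is really a uniqueness statement. Whenever such a structure is attached to an original vertex $v$ in $G'$, the two path lengths must equal $\floq(v)+1$ and $\fupq(v)+1$. To make this meaningful, I would read ``$v$ is attached to'' as: the structure lies in $G'-v$, is joined to $v$ by a single edge into the first clique, and the successive parts are connected only through the listed edges, so that in $G'$ each part is separated from the rest as in the gadget.

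The key step is to locate cliques of size $f(0)$ in $G'$. Since $G\in\C$ excludes $K_{f(0)}$ as a subgraph, no $f(0)$-clique of $G'$ can lie entirely among original vertices. Every edge between an original and a gadget vertex is the single attachment edge from $v$ to its own first clique. Hence a clique of size $f(0)\ge 2$ that uses any gadget vertex lies inside one gadget, together with possibly its owner $v$. Inside a gadget the only $f(0)$-cliques are the three designated copies of $K_{f(0)}$, because path vertices have degree at most $2$. So I would first prove this claim: the $f(0)$-cliques of $G'$ are exactly the $3n$ designated cliques. A clique of the form $\{v\}\cup(\text{part of the first } K_{f(0)})$ needs $v$ adjacent to at least two clique vertices, which fails because the attachment is a single edge. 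This needs $f(0)\ge 3$, and that can be assumed without loss of generality by increasing $f(0)$.

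With the claim in hand, take any structure as in the statement attached to an original vertex $v$. Its first clique contains a neighbour of $v$, and the only designated clique adjacent to $v$ is the first clique of $v$'s own gadget. From there the single outgoing edge leads into the path of length $\floq(v)+1$. The path must end where the next $K_{f(0)}$ begins, and by the claim that is the second designated clique of the gadget. Therefore the path has exactly $\floq(v)+1$ vertices, and likewise the second path has $\fupq(v)+1$ vertices. This gives $\ell=\floq(v)$ and $u=\fupq(v)$.

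I expect the main obstacle to be pinning down the attachment structure precisely enough to rule out ambiguous embeddings. For example, a ``path'' could wander into a clique or could be read starting from a different clique vertex. I would handle this by stating the structure as an FO-checkable pattern in which the path vertices are required to have degree $2$ in $G'$, apart from the endpoints' single edges into the cliques. This is also the form needed when the pattern is later encoded in an FO formula, with its length bounded by a function of $k$ and $f(0)$.
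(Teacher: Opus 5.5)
Your forward direction and your treatment of an original vertex are sound. They are also more detailed than the paper, which only asserts the lemma ``by construction and properties of $G$ and $G'$''. With $f(0)\ge 3$ the $f(0)$-cliques of $G'$ are exactly the designated ones, and tracing forward from an original $v$ forces its own gadget.

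The gap is that you recast the converse as ``whenever such a structure is attached to an \emph{original} vertex $v$''. The statement's converse says that \emph{any} vertex of $G'$ carrying the pattern is present in $G$. This is exactly what is used afterwards, when $\textbf{Check}_{\ell,u}(w)$ is said to recognise original vertices. Your forward tracing cannot cover a gadget vertex, because it relies on two facts that hold only for original $v$: $C^1_v$ is the only designated clique next to $v$, and the other outside neighbour of $C^1_v$ starts $v$'s lower-quota path.

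This case genuinely fails, both for the natural reading of ``attached'' and for the FO pattern you propose. Write $C^1_w,C^2_w,C^3_w$ for the cliques of $w$'s gadget and $a^w_0,\dots,a^w_{\mathrm{f_{lq}}(w)}$ for its first path, and let $wz\in E(G)$. The gadget vertex $a^w_0$ is joined by one edge to $C^1_w$. In $G'-a^w_0$ one continues with the path $w,z$ on two vertices, then $C^1_z$, then the path $a^z_0,\dots,a^z_{\mathrm{f_{lq}}(z)}$, then $C^2_z$. These parts are pairwise disjoint and consecutive ones are joined by an edge. So $a^w_0\notin V(G)$ carries the pattern for $\langle 1,\mathrm{f_{lq}}(z)\rangle$; this is also a satisfying assignment for the paper's $\textbf{Check}_{1,\mathrm{f_{lq}}(z)}(a^w_0)$. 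If $wz$ is an isolated edge of $G$, then $w$ and $z$ have degree $2$ in $G'$, so requiring degree-$2$ path vertices does not exclude it.

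What rules the example out is the far end, which your forward argument never inspects: $C^2_z$ has a second outside neighbour. So require, as in the gadget, that the last clique has no outside neighbour besides the last path vertex (only the cliques $C^3_w$ qualify) and that path vertices have degree $2$, and then argue backwards:
\begin{enumerate}
\item The last clique is some $C^3_w$.
\item The degree condition forces the second path to be $w$'s upper-quota path, so the middle clique is $C^2_w$.
\item Likewise the first path is $w$'s lower-quota path and the first clique is $C^1_w$, whose only remaining outside neighbour is $w$.
\end{enumerate}
Hence $v=w$, $\ell=\mathrm{f_{lq}}(w)$ and $u=\mathrm{f_{uq}}(w)$. Both conditions are needed. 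Without the degree condition, $a^z_0$ carries a pattern through $C^1_z$, the path $z,w$, and $C^1_w$, followed by a second path running along $w$'s lower path, through $C^2_w$, and along $w$'s upper path to $C^3_w$.
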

We show that there is a formula of length $\bigoh(2^k)$ on $G'$ that solves \dsq in $G$.
Consequently, we have that \dsq is $\fpt$ parameterized by $k$ if $G \! \in \! \C$ for a nowhere dense class $\C$.
Without loss of generality, we assume that $k>0$ and at least one vertex has non-zero lower-quota. If $k=0$, then the input is a yes-instance if and only if all vertices have lower-quota 0.
Moreover, if all vertices have lower-quota 0, then the input is a yes-instance.

\begin{itemize}
    \item 
  Formula \textbf{Dom}($u,v$) of length $\bigoh(1)$ that checks the domination of a vertex by another:
  \[\textbf{Dom}(u,v) := (u=v) \lor E(u,v).\]
  \item 
  Formula \textbf{Distinct}($x_1,\dots,x_s$) of length $\bigoh(s^2)$ that checks whether the input arguments are distinct:
  \[
  \textbf{Distinct}(x_1, \dots, x_s) := \land_{i \ne j \in [s]} \lnot (x_i = x_j).
  \]
  \item Formula \textbf{Clique}($q_1, \dots, q_r$) of length $\bigoh(r^2)$ that checks whether 
  $\{ q_1, \dots, q_r \}$ form a clique:
  \[
  \textbf{Clique}(q_1, \dots, q_r) := \land_{i\ne j \in [r]} E(q_i,q_j)
  \]
  \item Formula \textbf{Path}($q_1, \dots, q_r$) of length $\bigoh(r)$ that checks whether 
  $\langle q_1, \dots, q_r \rangle$ form a path:
  \[
  \textbf{Path}(q_1, \dots, q_r) := \land_{i \in [r-1]} E(q_i,q_{i+1})
  \]
  \item
  Formula \textbf{Check}{$_{\ell,u}$}($w$) of length $\bigoh(k^2)$ (note that $f(0)$ is a positive constant and we assume $\ell, u \in [0,k]$) that checks whether $w$ is an original vertex (equivalently, $w$ is not a gadget vertex) with $\floq(w)=\ell$ and $\fupq(w)=u$:
  \small
  \begin{equation*}
    \begin{aligned}
    \textbf{Check}_{\ell, u}(w) := 
    &\exists
    c^1_1, ..., c^1_{f(0)},
    a_0, ..., a_\ell, 
    c^2_1, ..., c^2_{f(0)}, 
    b_0, ..., b_u, 
    c^3_1, ..., c^3_{f(0)} \\
    &\textbf{Distinct}\Big(c^1_1, ..., c^1_{f(0)},
    a_0, ..., a_\ell, 
    c^2_1, ..., c^2_{f(0)}, 
    b_0, ..., b_u, 
    c^3_1, ..., c^3_{f(0)}\Big) \\
    &\land E(v,c^1_1) \land \textbf{Clique}(c^1_1,...,c^1_{f(0)}) \land E(c^1_{f(0)}, a_0) 
    \\& \land \textbf{Path}(a_0, ...,a_\ell) \\
    &\land E(a_\ell, c^2_1) \land \textbf{Clique}(c^2_1,...,c^2_{f(0)}) \land E(c^2_{f(0)}, b_0) \\& 
    \land \textbf{Path}(b_0, ...,b_u) \\
    &\land E(b_u, c^3_1) \land \textbf{Clique}(c^3_1,...,c^3_{f(0)}).
    \end{aligned}
  \end{equation*}
  \normalsize
  Observe that the correctness of the above formula follows from \Cref{lem:gadget-correctness}.
  \item 
  Formula \textbf{SatLQ}{$_\ell$}($v, x_1,\dots,x_s$) of length $\bigoh(2^k k)$ (we assume $s \le k$ and $\ell \in [0,k]$) that checks whether the vertex $v$ is dominated by at least $\ell$ vertices in $x_1, \dots, x_s$:
  
  If $\ell \in [s]$,
  \[
  \textbf{SatLQ}_{\ell}(v, x_1, \dots, x_s) := \bigvee\limits_{\substack{S \subseteq \{x_1, \dots, x_s\} \\ |S|=\ell}} \left(\bigwedge_{u \in S} \textbf{Dom}(u,v) \right)
  \]
  
  Else if $\ell >s$,
  \[
  \textbf{SatLQ}_{\ell}(v, x_1, \dots, x_s) := \lnot \left( v=v \right)
  \]
  
  Else $\ell=0$,
  \[
  \textbf{SatLQ}_{\ell}(v, x_1, \dots, x_s) := \left( v=v \right)
  \]
  Note that the formula always evaluates to True when $\ell=0$, and to False when $\ell > s$.
  \item
  Formula \textbf{SatUQ}$_{u}$($v,x_1,\dots,x_s$) of length $\bigoh(2^k k)$ (we assume $s\le k$ and $u \in [0,k]$) that checks whether the vertex $v$ is dominated by at most $u$ vertices in $x_1, \dots, x_s$:

  If $u\in [0,s-1]$,
  \begin{align*}
    \textbf{SatUQ}_{u}&(v, x_1, \dots, x_s) :=
    & \lnot \left(\bigvee_{\substack{S \subseteq \{x_1, \dots, x_s\} \\ |S|=u+1}} \left(\bigwedge_{u \in S} \textbf{Dom}(u,v) \right) \right)    
  \end{align*}
  
  Else $u\ge s$,
  \[
  \textbf{SatUQ}_{u}(v, x_1, \dots, x_s) := \left( v=v \right)
  \]
  Note that the formula always evaluates to True when $u \ge s$.
  \item
  Formula $\varphi_s$ of length $\bigoh(2^k k^3)$ (we assume $s\le k$) that combines everything and checks whether there is a solution of size exactly $s$:
    \small
  \begin{align*}
    &\varphi_s := \exists x_1,\dots,x_s \, \forall v \,
    \Bigg[
    \textbf{Distinct}(x_1, \dots ,x_s) \land
    \bigwedge_{i \in [s]}
    \Bigg(
    \bigvee_{\substack{\ell \in [0,k]\\u \in [\ell,k]}} \textbf{Check}_{\ell,u}(x_i)
    \Bigg)
    \Bigg]
    \\
    & \land
    \Bigg[
    \bigwedge_{\substack{\ell \in [0,k]\\u \in [0,k]}}
    \Bigg(
     \lnot \textbf{Check}_{\ell,u}(v) \lor
    \Big(
    \textbf{SatLQ}_{\ell}(v, x_1, \dots, x_s)  \land \textbf{SatUQ}_{u}(v, x_1, \dots, x_s)
    \Big)
    \Bigg)
    \Bigg]
\end{align*}
\normalsize
  Observe that the last part of the above formula is equivalent to checking whether $v$ is an original vertex with quota $\langle \ell,u \rangle$ and if that is the case, then $v$ is properly dominated by $x_1, \dots, x_k$.
  \item 
  Formula $\varphi$ of length $\bigoh(2^k k^4)$ that checks whether there is a solution of size at most $k$:
  \[
  \varphi := \bigvee_{s \in [k]} \varphi_s.
  \]
\end{itemize}

By construction, we have that $G' \models \varphi$ if and only if there is a \dsq of size $k$ in $\instI$.
Since the size of $\psi$ is bounded by a function of $k$, we have the following.
\begin{theorem}
  \label{thm:dsq-fpt-nd}
  \dsq parameterized by $k$ is \fpt on nowhere dense graphs.
\end{theorem}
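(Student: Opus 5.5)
The proof assembles the pieces already developed in this section. Given an instance $(G,k,\floq,\fupq)$ with $G\in\C$ for an effectively nowhere dense class $\C$ with function $f$, we first apply the preprocessing. If $k=0$, or if all lower-quotas are $0$, we answer directly. Otherwise we cap every upper-quota at $k$; this is harmless since no vertex can be dominated more than $k$ times. If some lower-quota exceeds $k$, we reject. After this step all quotas lie in $[0,k]$, so every quota pair is one of those enumerated by the formulas $\textbf{Check}_{\ell,u}$.

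Next we build $G'$ by attaching a quota gadget to each vertex. This takes polynomial time, since each gadget has $3f(0)+\floq(v)+\fupq(v)+2=\bigoh(k)$ vertices. By \Cref{lem:nowhere-dense}, $G'$ lies in the effectively nowhere dense class defined by the computable function $g(r)=\max(f(r),3f(0)+2)$. The proposition of Grohe et al.\ (Theorem 8.1 of \cite{grohe2017deciding}) therefore applies to $G'$, with vocabulary consisting only of $E$, fixed $\epsilon$, and the sentence $\varphi$. The sentence can be treated as a formula with a dummy free variable, and we check whether the returned set is nonempty. Since $|\varphi|=\bigoh(2^kk^4)$ (with constants depending on $f(0)$), the running time is $h(k)\cdot|V(G')|^{1+\epsilon}$, which is $h(k)\cdot n^{\bigoh(1)}$.

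The remaining task, and the main point needing care, is correctness: $G'\models\varphi$ iff $G$ has a solution of size at most $k$. By \Cref{lem:gadget-correctness}, $\textbf{Check}_{\ell,u}(w)$ holds exactly when $w$ is an original vertex with quota $\langle\ell,u\rangle$. We must also verify that no gadget vertex satisfies any $\textbf{Check}$. Such a vertex would need an attached structure consisting of three disjoint $K_{f(0)}$ cliques joined by paths. Each gadget contains exactly three copies of $K_{f(0)}$, and $G$ itself has no $K_{f(0)}$, so a gadget vertex cannot have the required attachment beyond its own chain. This argument uses the freshness and distinctness of the clique vertices.

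Given this, the first bracket of $\varphi_s$ forces $x_1,\dots,x_s$ to be distinct original vertices. The second bracket requires that every original vertex $v$, having a unique quota pair, satisfies $\textbf{SatLQ}_{\ell}$ and $\textbf{SatUQ}_{u}$ with respect to $x_1,\dots,x_s$. These say exactly that $\ell\le|N_G[v]\cap\{x_i\}|\le u$, because $\textbf{Dom}$ among original vertices coincides with closed adjacency in $G$. Gadget vertices satisfy no $\textbf{Check}$, so they are vacuously fine. Taking the disjunction over $s\in[k]$ gives the equivalence, and this proves the theorem.
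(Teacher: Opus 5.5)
\paragraph{The gap.} Your route is the paper's: quota gadgets, \Cref{lem:nowhere-dense}, the sentence $\varphi$, and Theorem~8.1 of Grohe et al. Capping the quotas at $k$ is a sensible addition. The gap is the step you flag yourself. The claim that no gadget vertex satisfies any $\textbf{Check}_{\ell,u}$ is false for the formula as written, and your argument does not establish it.

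Let $p_v,q_v$ be the vertices of the first clique $K^1_v$ of $v$'s gadget adjacent to $v$ and to the first path vertex $a^v_0$, respectively. (They must be distinct, or $\textbf{Check}$ fails even for original vertices.) The claim fails in two ways.
\begin{itemize}
\item The free variable $w$ is not an argument of \textbf{Distinct}, so $w$ may coincide with a clique variable. If $f(0)\ge 3$, every $y\in K^1_v\setminus\{p_v,q_v\}$ satisfies $\textbf{Check}_{\floq(v),\fupq(v)}(y)$. The witness is $v$'s own gadget with $c^1_1=p_v$, $c^1_2=y$, $c^1_{f(0)}=q_v$.
\item \textbf{Path} only asks for distinct vertices with consecutive edges, so a witness may run through $G$ and use cliques of other gadgets. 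For an edge $vx$ of $G$, take $K^1_v$ (entered at $q_v$, left at $p_v$), then the path $v,x$, then $K^1_x$, the lower path of $x$, and $K^2_x$. This witnesses $\textbf{Check}_{1,\floq(x)}(a^v_0)$, and it survives even if $w$ is added to \textbf{Distinct}.
\end{itemize}
So the fact that each gadget contains exactly three copies of $K_{f(0)}$ does not confine the witness to one gadget.

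\paragraph{Consequence and repair.} This breaks the equivalence itself. The second bracket of $\varphi_s$ imposes quotas on such gadget vertices, and only gadget vertices can dominate them. Take $G=K_{1,m}$ with $m\ge 1$, all quotas $\langle 1,1\rangle$, and $k=1$; any admissible $f$ has $f(0)\ge 3$ because $G$ has an edge. The centre is a solution. But every $y_v\in K^1_v\setminus\{p_v,q_v\}$ satisfies $\textbf{Check}_{1,1}$, so $\varphi_1$ demands $x_1\in N_{G'}[y_v]=K^1_v$ for every $v\in V(G)$. This is impossible since the $K^1_v$ are disjoint, hence $G'\not\models\varphi$.

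The paper's proof only asserts correctness ``by construction'' and has the same defect. You found the delicate point but did not resolve it. Either of two repairs works:
\begin{itemize}
\item Make $\textbf{Check}$ rigid. Put $w$ into \textbf{Distinct} and add closure conditions: every neighbour of a listed vertex is listed or is $w$, and $c^1_1$ is the only listed vertex adjacent to $w$. A witness is then a component of $G'-w$ joined to $w$ by a single edge and shaped exactly like a gadget. You would then check that only original vertices have such a component.
\item Drop the gadgets. The quoted Theorem~8.1 allows an arbitrary colored-graph vocabulary, so encode the quotas by $\bigoh(k)$ unary predicates on $G$ itself. The sentence then has size bounded by a function of $k$, its underlying graph is $G\in\C$, and \Cref{lem:nowhere-dense} is not needed.
\end{itemize}
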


\section{\dsq on apex-minor-free graphs}

  
  A graph is said to be \textit{planar} if it can be drawn on a plane with no two edges crossing each other.
  By Wagner's theorem, a graph is planar if and only if it does not contain $K_5$ or $K_{3,3}$ as a minor.
  A graph is an \textit{apex} graph if deleting a vertex results in a planar graph.
  A graph class $\C$ is said to be \textit{apex-minor-free} if all graphs in $\C$ avoid a fixed apex graph $H$ as a minor.
  Observe that planar graphs are also apex-minor-free.
  We show that \dsq is $\fpt(k)$ in apex-minor-free graphs using the \textit{bidimensionality} approach \cite{fomin2009contraction}.
  
  
  On a high level, the bidimensionality approach exploits the structure of graphs with large treewidth. 
  Towards that, we use the algorithm in \Cref{sec:dsq-bounded-treewidth-algorithms} for \dsq on graphs of bounded treewidth. 

In this section, we design a sub-exponential time \fpt algorithm for \dsq on apex-minor-free graphs. For that, we observe that, in polynomial time, we can discard some lower-quota 0 vertices depending on their neighborhoods.
\begin{rr}
  \label{rr:deg0}
  If there is a vertex $v$ with lower-quota 0 such that all vertices at a distance at most 2 from $v$ have lower-quota 0, then we delete $v$.
\end{rr}
To establish the correctness of the above, we show the following.
\begin{claim}
    Let $\instI$ and $\instJ$ denote the instances $(G,k,\floq,\fupq)$ and $(G\setminus \{v\}, k,$ 
    $\floq|_{V(G)\setminus\{v\}}, \fupq|_{V(G)\setminus\{v\}})$ respectively. Then, $\instI$ is a yes-instance if and only if $\instJ$ is a yes-instance.
\end{claim}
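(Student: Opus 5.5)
The plan is to prove both directions directly by modifying a solution. The only facts used are that domination counts can only drop when solution vertices are removed, and that every vertex whose count might drop has lower-quota $0$. Throughout, write $N_G[\cdot]$ for closed neighbourhoods in $G$ and note that $N_{G-v}[w]=N_G[w]\setminus\{v\}$ for every $w\neq v$.

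\emph{Forward direction.} Suppose $S$ is a solution of $\instI$, and set $S'=S\setminus\{v\}$, so $|S'|\le k$. Take any $w\in V(G)\setminus\{v\}$. Its count in $G-v$ is
\[
|N_{G-v}[w]\cap S'| = |N_G[w]\cap S\setminus\{v\}| \le |N_G[w]\cap S| \le \fupq(w),
\]
so the upper quota still holds. For the lower quota we split into two cases. If $w\notin N_G(v)$, then $v\notin N_G[w]$, so the count is unchanged and $\floq(w)$ is still met. If $w\in N_G(v)$, then $w$ is at distance $1$ from $v$, so $\floq(w)=0$ by the hypothesis of \Cref{rr:deg0}, and the lower quota holds trivially. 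Hence $S'$ is a solution of $\instJ$.

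\emph{Backward direction.} This is the step where care is needed. A solution $S'$ of $\instJ$ need not be a solution of $\instI$, because $v$ itself may be dominated more than $\fupq(v)$ times by $S'\cap N_G(v)$. The fix is to remove all of $v$'s neighbours from the solution: set $S=S'\setminus N_G(v)$, so $|S|\le k$.
\begin{itemize}
  \item For $v$, we get $|N_G[v]\cap S|=0$, because $v\notin S'$ and all neighbours of $v$ were removed. Since $\floq(v)=0\le\fupq(v)$, $v$ is properly dominated.
  \item For any $w\neq v$, we have $v\notin S$, so $|N_G[w]\cap S|\le|N_{G-v}[w]\cap S'|\le\fupq(w)$, and the upper quota holds.
  \item If the count of such a $w$ actually decreased, then $N_G[w]$ contains a vertex of $N_G(v)$. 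This means $w$ is at distance at most $2$ from $v$, so $\floq(w)=0$ by the hypothesis, and the lower quota holds trivially.
  \item If the count of $w$ did not decrease, then $w$ satisfies its lower quota exactly as it did under $S'$ in $\instJ$.
\end{itemize}
Hence $S$ is a solution of $\instI$.

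The main subtlety is this upper-quota issue at $v$ in the backward direction. It is exactly why \Cref{rr:deg0} requires lower-quota $0$ on the whole distance-$2$ neighbourhood of $v$, rather than only on $v$ and its neighbours.
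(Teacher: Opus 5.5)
Your proof is correct and matches the paper's argument: in the forward direction you delete $v$ from the solution, and in the backward direction you delete $N_G(v)$. This is the paper's $N^1_G(v)$, which gives the same set because $v\notin S'$. In both directions you use that every vertex whose count can drop lies within distance $2$ of $v$ and so has lower-quota $0$. You are also more explicit than the paper in checking that $v$ itself is properly dominated, with count $0$, after the removal.
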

\begin{proof}
 Let $D$ be a solution in $\instI$.
 If $v \notin D$, then $D$ is also a solution in $\instJ$.
 Suppose that $v \in D$.
 Since any vertex dominated by $v$ has lower-quota 0, the set $D\setminus \{v\}$ is a solution in $\instJ$.
 For the other direction, let $D'$ be a solution in $\instJ$.
 Observe that any vertex dominated by $N^1_G(v)$ is at distance at most $2$ from $v$ in $G$ and thus has lower-quota $0$. 
 Hence, the set $D' \setminus N^1_G(v)$ is a solution in $\instI$.
\end{proof}

A vertex satisfying the above condition can be found in polynomial time. 
Thus, exhaustive application of the above reduction rule can also be done in polynomial time.
From now on, we will assume so: for any vertex $v \in V(G)$, there is a vertex $w \in N^2_G[v]$ such that $\floq(w) \ge 1$.
Any solution in $\instI$ contains a vertex $w_d$ that dominates $w$ and $w_d$ is at distance at most 3 from $v$.
Consequently, any solution in $\instI$ is also a \textit{3-dominating set} (3-DS, in short) in $G$: a set \hide{of vertices} $S$ such that any vertex $v \in V(G)$ is at distance at most $3$ from $S$.

\begin{sloppypar}
For an integer $q$, the $(q \times q)$-grid is a graph with vertices $\{(i,j)\}_{i,j \in [q]}$
and edges $\{((i_1,j_1),(i_2,j_2))\colon |i_1-i_2|+|j_1-j_2|=1\}$.
The graph $\Gamma_q$ is obtained from a $(q \times q)$-grid by triangulating all inner faces so that all internal vertices have degree 6 (say by adding the edges $\{((x+1,y),(x,y+1))\colon x,y \in [q-1]\}$) and then connecting one corner vertex of degree 2 with all vertices of the external face (say by adding the edges $\{((q,q),(x,y))\colon x\in\{1,q\} \text{ or } y\in\{1,q\}\}$).
For obtaining an algorithm for \dsq, we use the following result due to Fomin et al.
\end{sloppypar}
\begin{proposition}[Theorem 1 in \cite{fomin2009contraction}]
  \label{prop:cont-apex}
  Let $H$ be an apex graph. There is a constant $c_H$ such that every connected graph $G$ excluding $H$ as a minor and of treewidth at least $c_H \cdot q$, contains $\Gamma_q$ as a contraction.
\end{proposition}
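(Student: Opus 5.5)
This proposition is cited from Fomin et al.\ and is used as a black box. If I had to prove it, I would reduce it to the linear grid-minor theorem for $H$-minor-free graphs and then upgrade the minor to a contraction.

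\textbf{Step 1: obtain a grid minor.} By the theorem of Demaine and Hajiaghayi, there is a constant $c'_H$ such that every $H$-minor-free graph of treewidth at least $c'_H\cdot q'$ contains the $(q'\times q')$-grid as a minor. I would choose $q'=\lambda q$ for a large constant $\lambda$ depending only on $H$, and fix a minor model of the grid.

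\textbf{Step 2: turn the minor into a contraction.} Since $G$ is connected, every vertex outside the branch sets can be absorbed into an adjacent branch set. Contracting all branch sets then gives a contraction $G^*$ of $G$. The vertex set of $G^*$ is that of the $(q'\times q')$-grid, and $G^*$ contains all grid edges, but it may also contain extra edges, namely those that came from deleted edges of $G$.

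\textbf{Step 3: control the extra edges (main obstacle).} Contractions cannot delete edges, so I must show that the extra edges of $G^*$ are essentially local. This is where apex-minor-freeness is used. Call an extra edge \emph{long} if it joins two grid vertices that are far apart, say at distance at least a constant $d_H$ in the grid. I would try to prove that if there are many long extra edges spread across the grid, then $H$ is a minor of $G^*$, and hence of $G$, a contradiction.

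The idea is as follows. Pick a subgrid $R$ and contract the part of the grid outside $R$, together with the endpoints of long edges that leave $R$, into a single vertex. This vertex sees many well-separated vertices of the planar subgrid, so it can play the role of the apex. Any apex graph is a minor of a sufficiently large grid plus a vertex adjacent to sufficiently many spread-out grid vertices, so $H$ would appear.

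Hence only boundedly many long edges can be \emph{spread out} in this sense. A pigeonhole argument then yields a $(\mu q\times \mu q)$ subgrid $R$ whose interior is untouched by long edges. Within $R$, all extra edges have bounded span. Local dense connections would also create an apex-type minor, so after grouping $R$ into constant-size blocks, each extra edge lies within one block or joins adjacent blocks.

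\textbf{Step 4: extract $\Gamma_q$.} I would contract each block of $R$ to a single vertex, giving a $(q\times q)$-grid with only local (diagonal) extra edges. Where the triangulating diagonals are missing, I would create them by contracting appropriate pairs inside the blocks, which is possible because the blocks have constant size. Finally, I would contract everything outside $R$, which is connected because $G$ is connected, into one vertex adjacent to the whole outer face, and identify it with a corner. This produces $\Gamma_q$ as a contraction. Tracking the constants gives $c_H=c'_H\lambda/\mu$.

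I expect Step 3 to be the delicate part: both the pigeonhole choice of $R$ and the proof that many long or spread-out edges force an apex minor.
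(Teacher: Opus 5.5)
The paper gives no proof of this proposition. It is quoted from Fomin, Golovach and Thilikos and used as a black box, as you note, so your outline has to stand on its own.

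Steps 1 and 2 are fine. The gap is in Step 3, which carries essentially all the content of the theorem and is only asserted; the mechanism you propose misses exactly the configurations that would defeat your pigeonhole. Contracting the complement of a subgrid $R$ into a vertex $z$ only registers long edges that cross $\partial R$ with their inner endpoint deep inside $R$. An inner endpoint near $\partial R$ contributes nothing, because $z$ already sees all of $\partial R$. Now place long edges of span just above $d_H$ in general position, so that one lies well inside every $\mu q'\times\mu q'$ subgrid. Every candidate $R$ is spoiled. Yet the boundary of any rectangle straddles only $O(1)$ of these edges, and their inner endpoints are within $d_H$ of $\partial R$, so no apex-type minor arises.

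Excluding this needs a different ingredient. An edge $uv$ whose ends do not share a face of the grid yields a cross (two disjoint paths with alternating ends) in a small surrounding subgrid. Many pairwise far-apart crosses in a grid then give a large clique minor. There is a second problem: a bound on spread-out edges leaving one fixed $R$ is not what the pigeonhole needs. You need that all long-edge endpoints away from the outer boundary lie in $O_H(1)$ balls of radius $O_H(1)$. This must be proved globally, because once you shrink $R$, edges internal to the old $R$ start leaving the new one.

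Step 4 also fails as written. With square blocks, four blocks meet at each block corner. A short extra edge across that corner can give the block grid the diagonal $(a,b)(a+1,b+1)$, the opposite one, or both. Contraction never deletes edges, so a wrong or doubled diagonal cannot be removed. Contracting pairs inside a block also creates no adjacency between distinct blocks, so you do not reach $\Gamma_q$.

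The standard remedy also makes your ``local dense connections'' argument unnecessary. Tile a parallelogram inside $R$ with bricks, each row shifted by half a brick in the same direction, so that brick $(x,y+1)$ touches exactly $(x,y)$ and $(x+1,y)$ below it. Make the bricks wider than twice the maximum span of the extra edges inside $R$, and taller than that span. Then:
\begin{enumerate}
\item every extra edge joins two bricks that are already adjacent in the triangular lattice;
\item by the choice of $R$, the rest of the graph touches only boundary bricks;
\item contracting the bricks gives exactly the triangulated grid with diagonals $(x+1,y)(x,y+1)$;
\item contracting the rest of the graph and merging it into the corner brick $(q,q)$ produces $\Gamma_q$.
\end{enumerate}
The rest of the graph is connected because $R$ lies strictly inside the big grid, not because $G$ is connected.
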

Let $D_3^G$ denote a minimum 3-DS in $G$.
Observe that for any graph $C$ that is a contraction of $G$, we have $|D_3^G| \ge |D_3^C|$.
This is because contracting any edge cannot increase the size of a minimum 3-DS.
Now we are ready to present the algorithm for the case where 
 $G$ is $H$-minor-free for a fixed apex graph $H$.

It is known that there is a $2^{\bigoh(s)} n^{\bigoh(1)}$ algorithm that, given the graph $G$ and an integer $s$, either returns a tree decomposition of $G$ of width at most $4s+4$ or concludes that $\tw(G)>s$ \cite{cygan2015parameterized}.
We set $s=c_H \cdot 15\sqrt{k}$ and run the algorithm: we consider two cases depending on its output.
If it concludes that $\tw(G)>s$, then
by \Cref{prop:cont-apex}, $G$ contains $\Gamma_{15 \sqrt{k}}$ as a contraction.
Observe that any internal vertex can 3-dominate at most 49 vertices (in the 7x7 grid that it is the center of).
Thus, any \dsq in $G$, which is also a 3-DS, is of size at least $|D_3^{\Gamma_{15 \sqrt{k}}}| \ge (225k-(4\cdot3\cdot(15\sqrt{k}-1)))/49 > k$ (the last inequality holds for all $k\ge 1$; in the first inequality, we use an upperbound on the number of vertices that the corner vertex adjacent to all external vertices can 3-dominate in $\Gamma_{15 \sqrt{k}}$) and we return that there is no DSQ of size $k$.
In the other case, the algorithm returns a tree decomposition $\mathcal{T}$ of $G$.
The width of $\mathcal{T}$ is at most $4 \cdot c_H \cdot 15 \sqrt{k}+4 = \bigoh(\sqrt{k})$ and we run the DP algorithm (in \Cref{sec:dsq-bounded-treewidth-algorithms}) on $\mathcal{T}$ and obtain an answer in time $2^{\bigoh(\sqrt{k} \log k)}\cdot n^{\bigoh(1)}$.
Thus, we have the following.

\begin{theorem}
  \label{thm:dsq-fpt-amf}
  \dsq can be solved in time $2^{\bigoh(\!\sqrt{k} \log k)} \!\cdot\! n^{\bigoh(1)}\!$ on apex-minor-free graphs.
\end{theorem}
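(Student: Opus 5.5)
The plan is the standard bidimensionality template, run on the reduced instance. First apply \Cref{rr:deg0} exhaustively in polynomial time. The claim above guarantees that this preserves the answer. Deleting vertices keeps the graph $H$-minor-free, so we stay inside the class.

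After reduction, every vertex $v$ has some $w \in N^2_G[v]$ with $\floq(w)\ge 1$. Some solution vertex must dominate $w$, so every \dsq is a 3-DS of $G$. We may handle each connected component separately, since \Cref{prop:cont-apex} needs connectivity. Solutions on different components combine by summing sizes, so we compute the minimum solution size per component with the DP and add.

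Fix a component and set $s = 15 c_H\sqrt{k}$. Run the $2^{\bigoh(s)}n^{\bigoh(1)}$ treewidth approximation, and branch on its output.

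If it reports $\tw > s$, \Cref{prop:cont-apex} yields $\Gamma_{15\sqrt{k}}$ as a contraction. Contraction does not increase the minimum 3-DS size, so it suffices to lower-bound the minimum 3-DS of $\Gamma_q$ with $q=15\sqrt{k}$.
\begin{itemize}
\item An internal vertex reaches only its $7\times 7$ neighbourhood within distance 3; the triangulation diagonals lie inside this box, so it 3-dominates at most 49 vertices.
\item The special corner vertex can be handled by charging it the whole boundary plus a few layers, roughly $12(q-1)$ vertices.
\end{itemize}
This gives a lower bound of about $(q^2-12q)/49 > k$, so we answer no.

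Otherwise we obtain a tree decomposition of width $\bigoh(\sqrt{k})$. We run the DP of \Cref{thm:dsq-fpt-ktw}, or its join-node-improved version, in time $2^{\tw}(k+1)^{2\tw}n^{\bigoh(1)} = 2^{\bigoh(\sqrt{k}\log k)}n^{\bigoh(1)}$.

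The main obstacle is the counting bound in the grid case. The special corner vertex of $\Gamma_q$ is adjacent to the whole outer face, so within distance 3 it reaches a band of width about two around the boundary. Its charge must therefore be bounded by $\bigoh(q)$, and the constant $15$ is chosen so that $q^2$ dominates both this term and $49k$.

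Care is also needed for $k$ small, where $15\sqrt{k}$ may not be an integer. Rounding $q$ up, and treating small $k$ by brute force, fixes this.
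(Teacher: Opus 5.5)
Your proposal is essentially the paper's proof: exhaustive application of the reduction rule, every solution is a 3-DS, $s=15c_H\sqrt{k}$, a $\Gamma_q$ contraction whose minimum 3-DS is bounded by a $49$-per-vertex count after an $\bigoh(q)$ boundary charge, and otherwise the treewidth DP; your per-component handling (needed because \Cref{prop:cont-apex} assumes connectivity) and your rounding of $q$ are details the paper leaves implicit. One caveat, which the paper shares: a non-corner vertex on or next to the outer face can reach the whole boundary through the apex corner, so the $49$ bound should be applied only to the $(q-6)^2$ vertices of the inner $(q-6)\times(q-6)$ subgrid (paths through the corner reach only the outer two layers, and the corner itself only the outer three), which gives $|D|\ge (q-6)^2/49 \ge (q^2-12(q-1))/49>k$ and repairs the count.
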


\section{\scqfull when all sets are \textit{small}}
\label{sec:appendix-sc}
The \textsc{Set Cover} problem (\textsc{SC}, in short) is a generalization of \ds in which we are given a universe $\U$, a family $\F \subseteq 2^{\U}$, and an integer $k$, and the goal is to find a subfamily $S \subseteq \F$ of size at most $k$ such that $\bigcup_{R \in S} R = \U$. While it is $W[2]$-hard parameterized by $k$, it is \fpt w.r.t. $k$ and $d$ when the sets are of size at most $d$ \cite{DBLP:journals/siamcomp/DowneyF95,cygan2015parameterized}. 
 We define \scqfull (\scq, in short) as follows. Given a universe $\U$, a family $\F \subseteq 2^{\U}$, coverage quotas $\floq,\fupq\colon \U\rightarrow \mathbb{N}\cup\{ 0\}$, and an integer $k$, determine whether there is a subfamily $S\subseteq \F$ of size at most $k$ such that for each $u\in \U$ we have $\floq(u) \le \vert \{R\colon R \in S \text{ and } u \in R\}\vert \le \fupq(u)$.
Two natural restrictions that we study are the following: 
\begin{enumerate*}
  \item[(1)] \scq when each element occurs in at most $f$ sets (bounded frequency);
  \item[(2)] \scq when each set contains at most $d$ elements (bounded size).
\end{enumerate*}
\begin{observation}
  \label{obs:scq-comp}
  \scq is \textup{NP}-hard even if $f\!=\!2$; and is \fpt par. by $(k,f)$.
\end{observation}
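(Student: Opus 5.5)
The observation has two parts, and I will prove them separately.

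\emph{NP-hardness for $f=2$.} I will reduce from a problem whose constraints are naturally ``each object meets at most/exactly one chosen set'', and where each object lies in at most two candidate sets. The cleanest choice is \is on graphs, with a target size $r$. Given $(H,r)$, let the universe be $E(H)$ and let $\F=\{R_v\colon v\in V(H)\}$ with $R_v$ the set of edges incident to $v$. Every element (an edge) then lies in exactly two sets, so $f=2$. Give every edge quota $\langle 0,1\rangle$, so a chosen family contains at most one endpoint of each edge. The problem asks for size \emph{at most} $k$, so I also need to force at least $r$ chosen sets. With $f\le 2$ I cannot add a single counting element adjacent to all sets.

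To force the count, I will reduce instead from \is restricted to cubic graphs, which is NP-hard. There the edges covered by a family of $r$ vertices, counted with multiplicity, number exactly $3r$. Edges of $H$ must still keep frequency two, so I cannot count them directly with a global element. I will set $k=r$ and add, for each vertex $v$, a private element $e_v\in R_v$. The intended encoding is the following, and the plan is to check it carefully.
\begin{itemize}[itemsep=0pt]
  \item Equivalently, pick a vertex $v$ with quota $\langle 1,1\rangle$ in a form where each set is one ``slot''.
  \item The cleaner alternative is \textsc{Exact Cover by 3-Sets} restricted to instances where every element lies in exactly two or three sets. Here I would take frequency-2 variants, e.g.\ perfect matching in 3-uniform hypergraphs of degree 2, which are known NP-hard.
  \item Give every element quota $\langle 1,1\rangle$ and set $k=|\U|/3$.
  \item Then an \scq solution is exactly an exact cover, giving hardness with $f=2$.
\end{itemize}
The main obstacle is choosing a source problem that is NP-hard with frequency exactly 2. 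I will go with exact cover where each element is in at most two sets, which is equivalent to \textsc{Perfect Matching} in the dual hypergraph. That problem is NP-hard when sets have size~3, being 3-dimensional-matching-like. If that citation fails, I will fall back on the cubic \is encoding above.

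\emph{FPT in $(k,f)$.} I will use a branching algorithm.
\begin{itemize}[itemsep=0pt]
  \item While some element $u$ is covered fewer than $\floq(u)$ times by the current partial solution $S$, branch on which of the at most $f$ sets containing $u$ (not already in $S$) to add.
  \item If there is no such set, or $|S|>k$, reject the branch.
  \item Prune any branch in which some element exceeds $\fupq$, since adding sets never decreases coverage.
  \item When every lower quota is met and all upper quotas hold, accept.
\end{itemize}
The depth is at most $k$ and the branching factor is at most $f$, so the tree has $f^k$ leaves. Each node takes polynomial time, giving $f^k\cdot(|\U|+|\F|)^{\bigoh(1)}$ overall.

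Correctness follows from two facts. Any solution extending $S$ must contain one of the sets covering a deficient $u$, so the branching is exhaustive. A minimal solution never needs sets beyond those forced by lower quotas, since dropping extra sets only lowers coverage and cannot break an upper quota. A solution that contains a set not forced at any step can therefore be shrunk, and the branching finds the shrunk version.
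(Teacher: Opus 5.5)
The FPT part of your proposal is correct and is the paper's argument. You branch on the at most $f$ sets through a deficient element and follow the branch that stays inside a fixed solution $T$. Monotonicity of coverage then keeps all upper quotas satisfied, and the depth is at most $k$.

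The NP-hardness part has a genuine gap. The source problem you settle on is exact cover in which every element lies in at most two sets, i.e., perfect matching in a $3$-uniform hypergraph of maximum degree $2$. This problem is not NP-hard; it is polynomial-time solvable. An element contained only in $A$ forces $x_A=1$, and an element contained in exactly $A$ and $B$ forces $x_A+x_B=1$. So the whole instance is a system of linear equations over $\mathbb{F}_2$, equivalently a bipartiteness test with some forced vertices. This is the standard remark in Garey and Johnson: X3C and 3DM become polynomial when no element occurs in more than two sets, and hardness needs frequency $3$. That reduction therefore proves nothing.

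Your fallback via cubic \textsc{Independent Set} also fails as set up. If every element has lower quota $0$ (edges with $\langle 0,1\rangle$, private elements $e_v$ with lower quota $0$), the feasible subfamilies are closed under taking subsets. The empty family is then always a solution, so the budget ``at most $k$'' can never force $r$ sets to be chosen. Giving $e_v$ lower quota $1$ instead would force every $R_v$ into the solution.

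The missing idea is to let the lower quotas carry the hardness by complementing your encoding: reduce from \textsc{Vertex Cover}, as the paper does. Take the universe $E(H)$, let $R_v$ be the set of edges incident to $v$, put quota $\langle 1,2\rangle$ on every edge, and keep $k$. Every edge lies in exactly two sets, so $f=2$. The feasible subfamilies of size at most $k$ are exactly the vertex covers of size at most $k$. Equivalently, $H$ has an independent set of size $r$ if and only if this instance with $k=|V(H)|-r$ is a yes-instance.
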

\begin{proof}
  (i.) In the \textsc{Vertex Cover} problem, we are given a graph $H=(V,E)$ and an integer $k$.
  The goal is to find a vertex subset $S$ of size at most $k$ such that for each edge $uv \in E(H)$, either $u \in S$ or $v \in S$.
  It is \textsf{NP}-hard.
  It can be reduced to \scq with $f=2$: the universe is the set of edges, and for each vertex the family contains the set of edges covered by it (each edge appears in the sets corresponding to its endpoints).
  
  (ii.) Observe that \scq parameterized by $k+f$ is \fpt by a simple branching algorithm: for any element $e$ with $\floq(e) \ge 1$, branch on choosing the sets that contain it.
  The correctness follows from the fact that at least one of those sets must be in any solution.
\end{proof}

In this section, we will consider instances of \scq where each set in the set family has size at most $d$.
We first present a randomized algorithm based on the \textit{color-coding} technique \cite{alon1995color,cygan2015parameterized}. 
Existing results on derandomizations of the technique lead to deterministic algorithms for the problem we are interested in.

We call an element $u\in\U$ \textit{interesting} if $\floq(u)\ge 1$, and \textit{uninteresting} otherwise.
Suppose that the input admits a solution $S=\{R_i\}_{i=1}^{k}$. 
Observe that the number of elements covered by any $k$ sized solution (even when counting multiplicity) is at most $kd$.
This gives an upper-bound on the number of interesting elements (the number of uninteresting elements is unbounded though).
The idea is to randomly color the universe with $kd$ colors so that the elements covered by $S$ are colorful (i.e., no two covered elements 
have the same color) and then determine the existence of such a solution based on the coloring.

Let $C = (C_1, \dots, C_{kd})$ be a random $kd$-partition of $\U$; such a partition/coloring can be obtained by independently performing the following for each element $u\in \U$: choose $i\in[kd]$ uniformly at random and place $u$ in $C_i$. 
From now on, we will condition on the event that the elements covered by $S$ are colorful according to $C$: this happens with probability at least $\binom{kd}{\ell}\ell!/(kd)^{\ell} \ge e^{-\ell} \ge e^{-kd}$, where $\ell= |\bigcup_{R \in S} R| \le kd$.
Then, any part $C_i$ contains at most one interesting element: if there are two interesting elements in $C_i$, then since both are covered by $S$, the coloring $C$ does not satisfy the event condition). 
If a part $C_i$ contains an interesting element $u$, then delete from $\mathcal{U}$ all elements $e\ne u \in C_i$ and the sets that contain $e$.
This deletion is safe since the unique element in $C_i$ covered by $S$ is $u$ (i.e., $C_i \cap \bigcup_{R \in S} R = \{u\}$).
Thus, we have the following.

\begin{observation}
\label{obs:structure-of-parts}
  Any part $C_i$ is either empty, or is of the form $\{e_i\}$ for some interesting element $e_i$, or is a set of uninteresting elements.
\end{observation}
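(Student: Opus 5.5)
The plan is to prove the observation directly from the conditioning event and the pruning step that precedes it, by a case split on what a part $C_i$ contains after pruning. Let $S$ be the fixed hypothetical solution, and let $W = \bigcup_{R \in S} R$ be the set of elements it covers. We work under the event that $W$ is colorful with respect to $C$, so $|C_i \cap W| \le 1$ for every $i \in [kd]$.

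The first step is to show that every interesting element lies in $W$. If $u$ is interesting, then $\floq(u) \ge 1$, so some set of $S$ contains $u$, and hence $u \in W$. Combined with colorfulness, this gives that each part $C_i$ contains at most one interesting element. This is the key point where the lower quota matters; uninteresting elements need not belong to $W$, which is why there can be arbitrarily many of them.

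Now split into cases according to the contents of $C_i$, where the pruning step acts on the universe restricted to $C_i$.

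\begin{itemize}
\item \emph{$C_i$ contains no interesting element.} Then $C_i$ is either empty or consists only of uninteresting elements, which are the first and third alternatives of the statement.
\item \emph{$C_i$ contains exactly one interesting element $e_i$.} By the pruning step, every other element $e \in C_i$ with $e \ne e_i$ is deleted from $\U$, and every set containing such an $e$ is deleted from $\F$. Hence $C_i = \{e_i\}$ afterwards, which is the second alternative.
\end{itemize}

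For the observation to be meaningful, the pruning must be safe, meaning the solution $S$ survives it. The only delicate point is checking that no set of $S$ is deleted. Suppose a set $R \in S$ contained a deleted element $e \in C_i$. Then $e \in W$, and $e_i \in W$ as well, with $e \ne e_i$ and both in $C_i$. This contradicts colorfulness. So $S$ remains a feasible family of sets, and it covers each surviving element with exactly the same multiplicity as before, so all quotas are still satisfied.

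Finally, since deleting elements from one part does not alter any other part, the case analysis applies to all parts simultaneously. This proves the observation.
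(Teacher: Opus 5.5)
Your proposal is correct and follows the paper's argument. Every interesting element must be covered by $S$, so colorfulness leaves at most one interesting element per part. Pruning the other elements of such a part is safe because none of them is covered by $S$, which yields the three-way structure; your explicit check that no set of $S$ is deleted simply spells out the paper's one-line remark $C_i \cap \bigcup_{R \in S} R = \{u\}$.
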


Let $\F=\{E_1, \dots, E_m\}$.
W.l.o.g., we may assume that each $E_i$ is colorful: otherwise we can delete $E_i$ since the elements covered by picking such a set will certainly not be colorful.
To look for a solution, we will compute the entries of the table $T$ defined as follows.
For $j \!\in\! [m]$ and a $kd$-tuple $X \!\in\! [0,k]^{kd}$, let $T[j,X]$ be the size of a minimum size subfamily in ${E_1, \dots, E_j}$ that covers elements in $C_i$ exactly $X[i]$ times (counting multiplicity).
We compute the entries of $T$ as follows. For the base case,
\begin{equation}
  \label{eq:smallsets-dp-basecase}
  T[0,X] = 
  \begin{cases}
    0 \text{\quad if $X=(0, \dots,0)$,}\\
    \infty \text{\quad otherwise.}
  \end{cases}
\end{equation}  
For each $j\in [m]$ and $X \in [0,k]^{kd}$ such that $\sum_{i \in [kd]} X[i] \le kd$, we compute $T[j,X]$ as follows.
Let $\mathrm{colormul}(E_j)$ (defined to capture the multiplicities of colors in $E_j$) denote an $kd$-length integer vector where the $i^{th}$ entry is the number of elements from color class $C_i$ in $E_j$.
\\
If $|E_j \cap C_i|>X[i]$, then 
\begin{equation}
  \label{eq:smallsets-dp-badset1-rec}
  T[j,X] = T[j-1,X].
\end{equation}
If there is an uninteresting element $u\in E_j$ such that $u \in C_i$ and $\fupq(u)<X[i]$, then 
\begin{equation}
  \label{eq:smallsets-dp-badset2-rec}
  T[j,X] = T[j-1,X].
\end{equation}
The condition $u \in C_i$ and $\fupq(u)<X[i]$ implies that $u \notin \bigcup_{R \in S} R$, the set of elements covered by the hypothetical solution.
Thus, the above equation is correct.
Otherwise if there are no uninteresting elements that satisfy the above condition, we have
\begin{equation}
  \label{eq:smallsets-dp-goodset-rec}
  T[j,X] = \min\left( T[j\!-\!1,X], 1\!+\!T[j\!-\!1,X\!-\!\mathrm{colormul}(E_j)]\right).
\end{equation}
\paragraph{Correctness of \Cref{eq:smallsets-dp-goodset-rec}:} Consider a solution corresponding to the entry $T[j-1,X]$. It gives an upperbound for $T[j,X]$ since it also satisfies the conditions of that entry. Similarly, the set $E_j$ along with a solution corresponding to the entry $T[j-1,X-\mathrm{colormul}(E_j)]$ gives an upperbound for $T[j,X]$. 
Thus, we have LHS $\le$ RHS in \Cref{eq:smallsets-dp-goodset-rec}. 
For the other direction, consider a solution $S$ corresponding to the entry $T[j,X]$.
If $E_j \in S$, then the removal of $E_j$ gives the upperbound $T[j-1,X-\mathrm{colormul}(E_j)] \le T[j,X]-1$.
Otherwise, we have the upperbound $T[j-1,X] \le T[j,X]$ since $S$ satisfies the conditions of the entry.
Thus, we have LHS $\ge$ RHS in \Cref{eq:smallsets-dp-goodset-rec}.

For each $j \in [m]$ and $X \in [0,k]^{kd}$ such that $\sum_{i \in [kd]} X[i] \le kd$,
the entries $T[j,X]$ are computed according to \Cref{eq:smallsets-dp-basecase,eq:smallsets-dp-badset1-rec,eq:smallsets-dp-badset2-rec,eq:smallsets-dp-goodset-rec}.
The number of $kd$-tuples $X\in [0,k]^{kd}$ such that $\sum_{i \in [kd]} X[i] \le kd$ is at most  $kd\binom{kd+k-1}{k-1} = 2^{\bigoh(k \log d)}$.
Thus, the entries of $T$, which can be enumerated in a counter-like fashion starting from $0^{kd}$, can be computed in time $2^{\bigoh(k \log d)} (|\U|+|\F|)^{\bigoh(1)}$.
To compute the answer for the input, it suffices to check if there exists $X$ such that $T[m,X] \!\le\! k$ and $\floq(e) \!\le\! X[i] \!\le\! \fupq(e)$ for each part $C_i$ that contains a unique interesting element $e$.
Note that all uninteresting elements have lower-quota 0 and \Cref{eq:smallsets-dp-badset1-rec,eq:smallsets-dp-badset2-rec} ensure that their upper-quotas are not violated.
Recall that we conditioned on the event that the elements covered by a hypothetical solution $S$ are colorful according to the random coloring $C$.
The event happened with probability at least $e^{-kd}$.
To obtain a deterministic algorithm from this, we use a standard pseudorandom object called \textit{perfect hash family}: a family of functions $\mathcal{H}\colon \U \rightarrow [kd]$ such that for each subset $Q \subseteq \U$ of size $kd$ there is a function $f_Q \in \mathcal{H}$ that is injective on $Q$.
Observe that each function in the family $\mathcal{H}$ can be viewed as a coloring and repeating the dynamic programming for all colorings in $\mathcal{H}$ gives us a deterministic algorithm.
By the properties of $\mathcal{H}$, if the input admits a solution $S$ of size $\le k$ (that covers at most $kd$ elements), then there is a coloring $C \in \mathcal{H}$ according to which $\bigcup_{R \in S} R$ is colorful and thus the DP procedure will find it.
It is known that one can construct such a perfect family $\mathcal{H}$ of size $e^{kd} (kd)^{\bigoh(\log kd)} \log (|\U|)$ in time linear in the size \cite{naor1995splitters}.
Thus, we have a deterministic algorithm for \scq that runs in time $2^{\bigoh(k d)}(|\U|+|\F|)^{\bigoh(1)}$.
\begin{theorem}
  \label{thm:scq-small}
  \scq with sets of size at most $d$ can be solved in time $2^{\bigoh(kd)} \cdot (|\U|+|\F|)^{\bigoh(1)}$.
\end{theorem}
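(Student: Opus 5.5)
The plan is to assemble the color-coding argument developed just before the statement into a deterministic algorithm. Suppose a solution $S$ with $|S|\le k$ exists, and let $Q=\bigcup_{R\in S}R$, so $|Q|\le kd$. Every interesting element lies in $Q$. I will enumerate the perfect hash family $\mathcal{H}$ of functions $\U\to[kd]$ from \cite{naor1995splitters}. Its size is $e^{kd}(kd)^{\bigoh(\log kd)}\log|\U| = 2^{\bigoh(kd)}\log|\U|$. By its defining property, some function in $\mathcal{H}$ is injective on $Q$ (pad $Q$ to size $kd$ if needed). For each coloring $C\in\mathcal{H}$ we then run the dynamic program, and we answer yes if and only if some run accepts.

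For a fixed coloring, the preprocessing proceeds as follows.
\begin{enumerate}
  \item If some part contains two interesting elements, this coloring cannot be injective on $Q$, so it is discarded.
  \item If a part contains an interesting element $u$, delete every other element of that part together with every set containing such an element. Under the good event no set of $S$ is deleted, so the step is safe. This yields \Cref{obs:structure-of-parts}.
  \item Delete every set that is not colorful.
\end{enumerate}
We then fill the table $T[j,X]$ using \Cref{eq:smallsets-dp-basecase,eq:smallsets-dp-badset1-rec,eq:smallsets-dp-badset2-rec,eq:smallsets-dp-goodset-rec}. We accept if some $X$ satisfies $T[m,X]\le k$ and $\floq(e)\le X[i]\le\fupq(e)$ for every part $C_i=\{e_i\}$ with $e_i$ interesting.

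Soundness requires that every accepted table entry corresponds to a genuine solution. This is the step I expect to be the main obstacle: uninteresting elements in a part of uninteresting elements are aggregated, and we must ensure no individual upper quota is exceeded. The key point is that all sets are colorful, so each chosen set hits a given part in at most one element. Hence, for any element $u\in C_i$, its coverage is at most the total count $X[i]$. The check in \Cref{eq:smallsets-dp-badset2-rec} forbids adding a set that contains an uninteresting $u\in C_i$ with $\fupq(u)<X[i]$, where $X$ is the final target vector. I will argue by induction on $j$ that a family realizing $T[j,X]$ uses only sets whose uninteresting elements $u\in C_i$ satisfy $\fupq(u)\ge X[i]$. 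It then follows that every such $u$ is covered at most $X[i]\le \fupq(u)$ times. Lower quotas of uninteresting elements are $0$ and hold trivially. Interesting elements sit alone in their part, so their exact coverage equals $X[i]$, which is checked at the end.

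Completeness is the converse. Take the good coloring and the hypothetical solution $S$, and set $X$ to be its color-count vector. No set of $S$ is discarded by \Cref{eq:smallsets-dp-badset1-rec}, since $S$ itself realizes $X$. No set of $S$ is discarded by \Cref{eq:smallsets-dp-badset2-rec} either, because colorfulness of $Q$ makes each nonempty part meet $Q$ in exactly one element, whose coverage equals $X[i]\le\fupq$. So the DP value satisfies $T[m,X]\le |S|\le k$.

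For the running time, the number of vectors $X$ with $\sum_i X[i]\le kd$ is at most $\binom{2kd}{kd}\le 2^{2kd}$. Each table entry takes polynomial time. Multiplying by $|\mathcal{H}|=2^{\bigoh(kd)}\log|\U|$ gives the claimed bound of $2^{\bigoh(kd)}\cdot(|\U|+|\F|)^{\bigoh(1)}$.
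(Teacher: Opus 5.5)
Your overall plan matches the paper's proof: the perfect hash family, the preprocessing, the table of \Cref{eq:smallsets-dp-basecase,eq:smallsets-dp-badset1-rec,eq:smallsets-dp-badset2-rec,eq:smallsets-dp-goodset-rec}, the final quota check on singleton parts, and your $\binom{2kd}{kd}$ count. The gap is in the soundness step you single out as the crux. The test in \Cref{eq:smallsets-dp-badset2-rec} does not compare $\fupq(u)$ with the final target vector. It compares it with the vector of the entry currently being computed, and that vector drops by $\mathrm{colormul}(E_j)$ each time a set is taken. So the invariant you plan to prove by induction on $j$ is false: that every set of a family realizing $T[j,X]$ has $\fupq(u)\ge X[i]$ for all its uninteresting $u\in C_i$. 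Whenever $E_j$ meets $C_i$, the inductive hypothesis for $T[j-1,X-\mathrm{colormul}(E_j)]$ only gives $\fupq(u)\ge X[i]-1$. Concretely, take a part $C_1=\{u,w\}$ of uninteresting elements with $\fupq(u)=1$ and $\fupq(w)=2$, sets $E_1=\{u\}$ and $E_2=\{w\}$, and $X=(2,0,\dots,0)$. The test on $u$ is made against $1$ and the test on $w$ against $2$. Hence $T[2,X]=2$ is realized by $\{E_1,E_2\}$, even though $\fupq(u)=1<X[1]$. No quota is violated here, so the algorithm is fine, but your justification for it does not go through.

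The missing idea is a last-occurrence argument. Let $u\in C_i$ be an uninteresting element covered by the family the table returns. Let $E_{j^*}$ be the chosen set of largest index containing $u$, and let $T[j^*,X']$ be the entry at which it is taken. Then $X'[i]$ counts the hits on $C_i$ by chosen sets of index at most $j^*$. Every chosen set containing $u$ has index at most $j^*$ and, being colorful, contributes exactly one to coordinate $i$. So $u$ is covered at most $X'[i]$ times. Since $E_{j^*}$ was taken at that entry, \Cref{eq:smallsets-dp-badset2-rec} did not fire, i.e.\ $X'[i]\le\fupq(u)$. The paper only asserts that \Cref{eq:smallsets-dp-badset1-rec,eq:smallsets-dp-badset2-rec} protect the upper quotas; this argument is the reason they do. Alternatively, you could make your invariant true by changing the algorithm. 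For each final target $X^*$, first discard every set containing an uninteresting $u\in C_i$ with $\fupq(u)<X^*[i]$; this costs only another $2^{\bigoh(kd)}$ factor. Your completeness argument survives once read correctly. Along the path of $S$, the intermediate vectors are coordinatewise at most the final one, so your bound $X[i]\le\fupq(u)$ implies the test passes at every stage.
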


\section{Conclusion and Open Questions}
Our results leave several intriguing questions open for future investigation. 
Firstly, our positive result for nowhere dense graphs relies on a reduction to FO model-checking. A natural direction for future work is to develop more direct and combinatorial \fpt{} algorithms for \dsq{} on these graph classes, avoiding the reliance on logical machinery. It would also be interesting to explore whether similar logic-based techniques can be extended to characterize even broader classes of graphs that admit efficient algorithms for \dsq{}.

Another promising direction concerns the approximability of \dsq{}, which remains largely unexplored. Can one design \fpt{}-approximation algorithms for \dsq{}, especially in settings where exact solutions are unlikely due to strong hardness results?

Finally, the question of preprocessing is open. In particular, can we obtain efficient kernelization algorithms for \dsq{}, at least on restricted graph classes? Given the success of kernelization techniques for \ds{} on sparse graphs, extending these ideas to \dsq{} could yield significant algorithmic insights.





\bibliographystyle{splncs04}
\bibliography{bibliography}



\end{document}